\newtheorem{lemma}{Lemma}
\newtheorem{theorem}{Theorem}
\newtheorem{definition}{Definition}
\newtheorem{problem}{Problem}
\definecolor{corabg}{RGB}{240,248,255}
\definecolor{citebg}{RGB}{255,240,245}
\definecolor{pubmbg}{RGB}{240,255,240}
\newenvironment{graybox}
{
    \begin{tcolorbox}[
        colback=lightgray!20,
        colframe=white, 
        enhanced,
        sharp corners,
        boxrule=0pt, 
        left=10pt,
        right=10pt,
        top=10pt,
        bottom=10pt
    ]
}
{
    \end{tcolorbox}
}
\renewcommand{\P}{\mathcal{P}}
\newcommand{\diag}{\operatorname{diag}}
\title{Quantifying and Minimizing Perception Gap in Social Networks}
\newcommand{\shorttitle}{Quantifying and Minimizing Perception Gap in Social Networks}
\author{
Hemant Kumar Gehlot \\
Department of Mathematics \\
Indian Institute of Technology, Roorkee, India \\
\texttt{hemant\_kg@ma.iitr.ac.in} \\
\and
Mohammad Shirzadi\thanks{Corresponding author} \\
School of Computing \\
Australian National University, Canberra, Australia \\
\texttt{mohammad.shirzadi@anu.edu.au}\\
\and
Junhao Gan \\
School of Computing and Information Systems (CIS) \\
The University of Melbourne, Melbourne, Australia \\
\texttt{junhao.gan@unimelb.edu.au} \\
\and
Ahad N. Zehmakan \\
School of Computing \\
Australian National University, Canberra, Australia \\
\texttt{ahadn.zehmakan@anu.edu.au}\\
}
\begin{document}
\maketitle
\begin{abstract}
Social media has transformed global communication, yet its network structure can systematically distort perceptions through effects like the majority illusion and echo chambers. We introduce the perception gap index, a graph-based measure that quantifies local-global opinion divergence, which can be viewed as a generalization of the majority illusion to continuous settings. Using techniques from spectral graph theory, we demonstrate that higher connectivity makes networks more resilient to perception distortion. Our analysis of stochastic block models, however, shows that pronounced community structure increases vulnerability. We also study the problem of minimizing the perception gap via link recommendation with a fixed budget. We prove that this problem does not admit a polynomial-time algorithm for any bounded approximation ratio, unless P = NP. However, we propose a collection of efficient heuristic methods that have been demonstrated to produce near-optimal solutions on real-world network data.
\end{abstract}


\section{Introduction}

In the last few decades, online social networks have expanded rapidly, driven by major platforms that have reshaped global communication. These networks have made it easier for people around the world to share information and stay connected. Although benefits, such as increased access to information and enhanced online interaction, are widely recognized, growing evidence highlights some negative impacts, such as the formation of filter bubbles~\cite{chitra2020analyzing}, echo chambers~\cite{pariser2011filter}, and the majority illusion~\cite{lerman2016majority}, where minority views appear mainstream due to network topology.

There has been a growing body of work in AI subcommunities such as Computational Social Choice, to understand and control such phenomena in social networks; cf.~\cite{shirzadi2025opinion}. The goal is to quantify measures such as polarization~\cite{musco2018minimizing} and majority illusion~\cite{lerman2016majority}, using mathematical concepts such as graph parameters. This then forms the foundation for the study of optimization problems, motivated by real-world applications, such as minimizing polarization through link recommendation using complexity and algorithmic techniques, cf.~\cite{musco2018minimizing}.

An important phenomenon is how local observations can lead to the misrepresentation of global reality in social networks. Great examples of this phenomenon include the friendship paradox~\cite{Feld1991,eom2014generalized} (where most people believe that they have fewer friends than their peers), the majority illusion~\cite{lerman2016majority}, and echo chambers~\cite{pariser2011filter}. Furthermore, such distortions are believed to be further amplified by recommendation systems that prioritize engagement, inadvertently isolating communities~\cite{pariser2011filter, Bhalla2023}. 

One natural related measure is \emph{perception gap}, a graph-based measure that captures the discrepancy between local opinion distributions and the global average opinion (please, see Definition~\ref{def_polarization} for a formal formulation). This can be seen as a generalization of the majority illusion to the continuous setup, where opinions belong to the interval $[+1,-1]$.
A larger perception gap indicates a greater divergence between local and global opinions.

This can particularly be an indicator of the presence of echo chambers, where users have like-minded connections and have a distorted perception of global opinion distribution.

We aim to shed some light on the connection between the structure of the underlying network and the perception gap. Establishing such a connection, then, allows us to propose link recommendation algorithms to reduce the perception gap by connecting strategically selected users (nodes).

\textbf{Network Structure and Perception Gap.} We first study the connection between the perception gap and various graph properties. One of our main findings is that graphs with better connectivity (characterized in algebraic terms) facilitate smaller values of maximum achievable perception gap (over all choices of opinions). That is, a higher level of connectivity makes a network more ``resilient'' against an adversary who aims to increase the perception gap. We also provide further results on special graph structures and opinion distributions, such as the stochastic block model (designed to resemble real-world networks~\cite{abbe2018community}).

\textbf{Inapproximability Results.} Once we have an accurate formulation and a better understanding of the perception gap, one natural question arises: how can one reduce the perception gap in social networks? One popular approach is to add new edges to the graph (inspired by link recommendation algorithms on online platforms), cf.~\cite{banerjee2023mitigating,wu2024targeted,liu2023fast}. Thus, we study the problem of minimizing the perception gap by adding $k$ new edges for a given network and opinion distribution. We prove, unless P = NP, there is no polynomial-time algorithm with a bounded approximation ratio for this problem via a novel reduction from the generalized partition problem. Furthermore, we demonstrate that the objective function is neither monotone nor supermodular, underscoring the complexity of minimizing the perception gap.

\textbf{Algorithms and Experiments.} Despite of the hardness result, to tackle this problem, we employ several polynomial-time heuristic algorithms, which greedily add edges in a deterministic or randomized fashion. While these algorithms cannot possess any theoretical approximation guarantee according to our hardness results, we provide evidence that their performance is satisfactory. In particular, this is evident in our experiments on real-world network data such as Reddit, Twitter, Twitch, Facebook, and LastFM. To test the accuracy of these algorithms, one needs to compute the optimal solution, which is computationally expensive even for very small networks using a trivial brute-force search. Thus, as a by-product, we develop a non-trivial exact algorithm, which is, of course, not polynomial-time due to the NP-hardness of the problem, but permits us to cover larger graphs.

\subsection{Preliminaries}

\textbf{Graph Notations.} Let $G = (V, E)$ be an undirected, unweighted graph representing a social network, where $V = \{ v_1, \dots, v_n \}$ is the set of nodes and $E$ is the set of edges connecting pairs of nodes. Let $n = |V|$ denote the number of nodes in the graph. For each node $v_i \in V$, the neighborhood of $v_i$, denoted by $N(v_i)$, is the set of nodes $u \in V$ such that $\{ v_i, u \} \in E$. Additionally, for each node $v_i \in V$, let $N[v_i] = N(v_i) \cup \{v_i\}$ denotes the closed neighborhood of $v_i$. The degree of node $v_i$, denoted by $d_i$, is the number of nodes in its \textit{closed} neighborhood, i.e., $d_i = |N[v_i]|$. 

The adjacency matrix $\textbf{A}$ is an $n \times n$ matrix where $\textbf{A}_{ij} = 1$ if there is an edge between nodes $v_i$ and $v_j$ or $i = j$ (conceptually enforcing self-loops), and $\textbf{A}_{ij} = 0$ otherwise. Note that the self-loops enforced in $\textbf{A}$ are just conceptual and do not necessarily exist physically, as we will see shortly.
Alternatively, we could drop this assumption and reformulate the polarization definition accordingly, but the current setup allows better matrix formulation, cosmetically. The degree matrix $\textbf{D} \in \mathbb{R}^{n \times n}$ is a diagonal matrix where $\textbf{D}_{ii} = d_i$
and the Laplacian matrix $\textbf{L}$ of the graph is defined as $\textbf{L} = \textbf{D} - \textbf{A}$. The normalized adjacency matrix $\mathcal{A}$ is defined as $
\mathcal{A} = \textbf{D}^{-\frac{1}{2}} \textbf{A} \textbf{D}^{-\frac{1}{2}}$. Similarly, the normalized Laplacian matrix $\mathcal{L}$ is defined as $\mathcal{L} = \textbf{D}^{-\frac{1}{2}} \textbf{L} \textbf{D}^{-\frac{1}{2}} = \textbf{I} - \mathcal{A}$, where $\textbf{I} \in \mathbb{R}^{n \times n}$ is the identity matrix.

\textbf{Opinion Vector.} For any node $ v_i \in V $, we assign an opinion $s_i \in [-1, 1]$ and let the opinion vector to be defined as $\textbf{S} = [s_1, s_2, \ldots, s_n]^T$, where $-1$ indicates an extreme negative opinion and $+1$ denotes an extreme positive opinion.
For any node $v_i \in V$, we define $\hat{s}_i = \sum_{v_k \in N[v_i]} s_k$ 
as the sum of opinions in the closed neighborhood of $v_i$ (interpreted as the sum of all opinions that an individual sees, including their own opinion). Let also $\bar{\textbf{S}} = \textbf{S}- \frac{\textbf{S}^T \Vec{\textbf{1}}}{n} \Vec{\textbf{1}}$ be the \textit{mean-centralized opinion vector} across all nodes. Here $\Vec{\textbf{1}}$ is a column vector of length $n$ with all entries as one.

\textbf{Linear Algebra Notations.} A symmetric matrix $\textbf{M}$ 
is positive definite if $\textbf{x}^T\textbf{M}\textbf{x} > 0$ for all non-zero vectors $\textbf{x}$. If equality is allowed, i.e., $\textbf{x}^T\textbf{M}\textbf{x} \geq 0$ for all non-zero vectors $\textbf{x}$, we call $\textbf{M}$ a positive semi-definite matrix.
For a square matrix $\textbf{M}$, scalar $\lambda$ is an eigenvalue of $\textbf{M}$ if there exists a non-zero vector $\textbf{v}$ such that $\textbf{M}\textbf{v} = \lambda \textbf{v}$. The vector $\textbf{v}$ is called an eigenvector corresponding to $\lambda$, together forming $(\lambda,\textbf{v}) $ an eigenpair of the matrix $\textbf{M}$. Given the symmetry of $\mathbf{M}$, we can express it as $\mathbf{M} = \mathbf{Q} \boldsymbol{\Lambda} \mathbf{Q}^\top$, where $\mathbf{Q}$ is an orthonormal matrix with columns $\{q_1, q_2, q_3, \cdots, q_n\}$, and $\boldsymbol{\Lambda} = \operatorname{\diag}(\lambda_1, \lambda_2, \lambda_3, \cdots, \lambda_n)$ with $\lambda_1 \geq \lambda_2 \geq \cdots \geq \lambda_n$ where $\{(\lambda_i,q_i)\}_{i=1}^n$ are the eigenpair of matrix $\textbf{M}$, for more details see e.g.,~\cite{horn2012matrix,davydov2021meshless}. 
For a non-square matrix $\mathbf{M} \in \mathbb{R}^{m \times n}$, its singular value decomposition is $\mathbf{M} = \mathbf{U} \mathbf{W} \mathbf{V}^T$, where \( \mathbf{U} \in \mathbb{R}^{m \times m} \) and \( \mathbf{V} \in \mathbb{R}^{n \times n} \) are orthogonal matrices whose columns are the eigenvectors of \( \mathbf{M} \mathbf{M}^T \) and \( \mathbf{M}^T \mathbf{M} \), respectively, and $\mathbf{W} \in \mathbb{R}^{m \times n}$ contains singular values $\sigma_1 \geq \sigma_2 \geq \cdots \geq \sigma_r \geq 0$ (\(r = \min\{m, n\}\)) along its diagonal. These singular values are the square roots of the non-zero shared eigenvalues of $\mathbf{M}^T \mathbf{M}$ or $\mathbf{M} \mathbf{M}^T$~\cite{horn2012matrix}.

\subsection{Quantifying Perception Gap}
%
We focus on a definition of the perception gap that measures how much the people's average views within their social groups differ from the average opinion of the entire network. Formally: 

\begin{definition}[Perception Gap]\label{def_polarization}
For a social network $G = (V, E)$ with $n$ nodes and an opinion vector $\textbf{S} = [s_1, s_2, \cdots, s_n]^T$, we define the perception gap as
\begin{equation}\label{polarization_first_definition}
\P(G, \textbf{S}) = \sum_{i=1}^{n} \left( \frac{\hat{s}_i}{d_i} - \frac{\sum_{j=1}^{n} s_j}{n} \right)^2, 
\end{equation}
where $d_i$ is the degree of node $v_i$ and $\hat{s}_i = \sum_{v_k \in N[v_i]} s_k$ is the sum of opinions in the closed neighborhood of $v_i$.
\end{definition}

Roughly speaking, nodes that are exposed to a biased opinion contribute more to the perception gap of the network than nodes whose social group is a rather unbiased sample of the global opinion distribution.

\section{Related Work}

Below, we review key findings on related concepts such as the majority illusion and the echo chamber, as well as similar intervention-optimization problems.

\subsection{Majority Illusion, and Structural Bias}

Network structure can distort local perceptions, leading to phenomena such as the friendship paradox and the paradox of majority illusion. The friendship paradox~\cite{lerman2016majority} arises because high-degree nodes are overrepresented in neighborhoods, making most people’s friends seem more connected. This extends to other degree-correlated traits like sentiment~\cite{SentimentParadox2020}. The majority illusion~\cite{lerman2016majority} shows that rare global traits can appear common locally, especially in networks with heterogeneous degrees and disassortative mixing~\cite{newman2002assortative}. Empirical evidence by Lerman et al.~\cite{lerman2016majority} confirms skewed degree distributions, negative assortativity, and correlations between degree and the attribute amplify the illusion. Also, recent work by Wei and Leto~\cite{Zhang2023} demonstrate that the network structure itself induces a perception bias, in which sparse connections amplify divergences between local/global opinions.

Grandi et al.~\cite{grandi2022complexity}~initiate the algorithmic study of verifying and eliminating majority illusion. Given a social network and a parameter $q \in [0,1]$ representing the fraction of nodes under illusion, they show that deciding whether a labeling induces a $q$-majority illusion is NP-complete for all rational $q \in \left(\frac{1}{2}, 1\right]$, even on planar, bipartite, and bounded-degree graphs. Moreover, Fioravantes et al.~\cite{fioravantes2025eliminating} prove that this problem is NP-hard, as well as W[2]-hard when parameterized by the solution size, even if the input graph is bipartite and has bounded diameters. In contrast, Dippel et al.~\cite{dippel2025eliminating} prove that complete elimination is solvable in polynomial time.

\subsection{Polarization Dynamics}  

The Friedkin-Johnsen model~\cite{Friedkin1990} (see~\cite{shirzadi2025opinion} for a comprehensive review of this model) is widely used to study polarization (variance in opinion distribution at equilibrium). Under this model, R{\'a}cz and Rigobon~\cite{racz2023towards} link graph structural properties to polarization dynamics. Matakos et al.~\cite{matakos2017measuring} study the minimization of polarization by neutralizing $k$ nodes and proved NP-hardness. Musco et al.~\cite{musco2018minimizing} propose a convex, mean-centered polarization objective, solvable under budget constraints. Several follow-up papers have studied link recommendation algorithms aimed at minimizing this polarization (and other adversarial interventions) objective, see, e.g.,~\cite{zhu2021minimizing,zhou2024friedkin}. Chitra and Musco~\cite{chitra2020analyzing} model filter bubbles, showing that small edge weight changes can significantly increase polarization. Bhalla et al.~\cite{bhalla2023local} showed that friend-of-friend links amplify polarization via confirmation bias, while random edge recommendation reduces it.

\subsection{Algorithmic Interventions}
Algorithmic interventions in social networks are commonly leveraged to address structural biases and achieve objectives such as reducing polarization, removing echo chambers, and correcting systematic misperceptions. Common strategies include manipulating node attributes (e.g., opinions~\cite{matakos2017measuring,he2021dynamic}, resistance~\cite{zhou2025efficient}, stubbornness~\cite{shirzadi2024stubborn}) or modifying the network structure itself, see e.g.~\cite{racz2023towards,ren2020detecting}.
Inspired by link recommendation algorithms, the edge addition operation has been widely studied due to its practicality and has been applied to minimize polarization~\cite{garimella2017reducing, gaitonde2020polarization,zhu2021minimizing}, and eliminate the majority illusion~\cite{grandi2022majority, fioravantes2025eliminating}. We similarly study the problem of edge addition to minimize the perception gap index.

\section{Theoretical Foundation} \label{theory_section2}
In this section, we explore the theoretical underpinnings of the perception gap definition. We provide insights into the connection between graph structural properties and the perception gap index.

\subsection{Foundation}
For facilitating our analysis, we reformulate Equation~\eqref{polarization_first_definition} in the following abstract matrix form
\begin{equation*}
\P(G, \mathbf{S}) = \left\lVert \left(\mathbf{D}^{-1} \mathbf{A} - \frac{\Vec{\textbf{1}} \Vec{\textbf{1}}^T}{n}\right) \mathbf{S} \right\rVert_2^2. 
\end{equation*}
By the fact that $\mathbf{A} \Vec{\textbf{1}} = \mathbf{D} \Vec{\textbf{1}}$, we have $\P(G, \mathbf{S})= \left\lVert \mathbf{D}^{-1} \mathbf{A}\left(\mathbf{I} - \frac{\Vec{\textbf{1}} \Vec{\textbf{1}}^T}{n}\right) \mathbf{S} \right\rVert_2^2=\left\lVert \mathbf{D}^{-1} \mathbf{A} \mathbf{C_n} \mathbf{S} \right\rVert_2^2$ 
where $\mathbf{C_n} := \mathbf{I} - \frac{\Vec{\textbf{1}} \Vec{\textbf{1}}^T}{n}$ is the centering matrix. Recalling that $\bar{\textbf{S}}=\textbf{S}-\frac{\textbf{S}^T \Vec{\textbf{1}}}{n} \Vec{\textbf{1}}$, we have $\mathbf{C_n} \mathbf{S} = \mathbf{\bar{S}}$, and so the following final form:
\begin{equation} \label{matrix_fom}
\P(G, \mathbf{S}) = \left\lVert \mathbf{D}^{-1} \mathbf{A} \mathbf{\bar{S}} \right\rVert_2^2 = \mathbf{\bar{S}}^T \mathbf{A}^{T} \mathbf{D}^{-2} \mathbf{A} \mathbf{\bar{S}}.
\end{equation}

This representation expresses the perception gap in terms of network structure ($\mathbf{A}$ and $\mathbf{D}$) and mean-centered opinion distribution ($\mathbf{\bar{S}}$). A first observation is that the perception gap remains invariant when we centralize the opinion vector, i.e., $\P(G, \bar{\textbf{S}}) = \P(G, \textbf{S})$. To see this,  observe that $\mathbf{C_n} \mathbf{S} = \mathbf{\bar{S}}$ and $\bar{\textbf{S}}^T \vec{\textbf{1}}  = \textbf{0} $, then we have 
$\mathbf{C_n} \mathbf{C_n} \mathbf{S} = \mathbf{C_n} \mathbf{\bar{S}} = \bar{\textbf{S}} - \frac{\bar{\textbf{S}}^T \vec{\textbf{1}}}{n} \vec{\textbf{1}} = \bar{\textbf{S}} = \mathbf{C_n} \mathbf{S}$.

\subsection{Maximum Achievable Perception Gap}

Suppose, for a graph $G$, that one party (e.g., an adversary) can select an opinion vector to maximize the perception gap. What are the graphs that are resilient against such attacks? That is, what are the graphs for which the maximum achievable perception gap is small? In this section, if $\textbf{S}$ is not given, it is assumed to lie in the domain $[ -1, 1]^n$.

\begin{theorem}[Perception Gap Bounds]\label{Polarization_Bounds}
For any graph $G$ and any opinion vector \textbf{S} where $\| \textbf{S} \|_2 \leq R$, the following bounds hold
\begin{equation*}
\sigma^2_{2}(\mathbf{D}^{-1} \mathbf{A}) \| \mathbf{{S}} \|_2^2  \leq \max_{\mathbf{S},  \lVert \mathbf{S} \rVert_2 \leq R} \P(G, \mathbf{S}) \leq  \sigma^2_{1}(\mathbf{D}^{-1} \mathbf{A}) \| \mathbf{{S}} \|_2^2
\end{equation*}
where $\sigma_1$ and $\sigma_2$ are the largest and second-largest singular values of $\mathbf{D^{-1} A}$, respectively.
\end{theorem}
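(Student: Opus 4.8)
The plan is to work entirely with the matrix form $\P(G,\mathbf{S}) = \lVert \mathbf{M}\bar{\mathbf{S}}\rVert_2^2$ from Equation~\eqref{matrix_fom}, where I abbreviate $\mathbf{M} := \mathbf{D}^{-1}\mathbf{A}$ and use its singular value decomposition $\mathbf{M} = \mathbf{U}\mathbf{W}\mathbf{V}^{T}$, with right singular vectors $\mathbf{v}_1,\dots,\mathbf{v}_n$ (the columns of $\mathbf{V}$), left singular vectors $\mathbf{u}_1,\dots,\mathbf{u}_n$ (the columns of $\mathbf{U}$), and singular values $\sigma_1 \geq \sigma_2 \geq \cdots \geq 0$. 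The two inequalities are proved separately; in both, $\mathbf{S}$ ranges over the Euclidean ball of radius $R$, and the bounding quantities are read with $\lVert\mathbf{S}\rVert_2$ taken at its maximal value $R$ (the upper bound in fact holds for every fixed $\mathbf{S}$ in the ball).

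For the upper bound, I first note that $\mathbf{C_n} = \mathbf{I} - \tfrac1n \vec{\textbf{1}}\vec{\textbf{1}}^{T}$ is an orthogonal projection, so $\lVert\bar{\mathbf{S}}\rVert_2 = \lVert\mathbf{C_n}\mathbf{S}\rVert_2 \leq \lVert\mathbf{S}\rVert_2$. Combining this with the operator-norm bound $\lVert\mathbf{M}\mathbf{x}\rVert_2 \leq \sigma_1(\mathbf{M})\,\lVert\mathbf{x}\rVert_2$, applied to $\mathbf{x} = \bar{\mathbf{S}}$, gives $\P(G,\mathbf{S}) = \lVert\mathbf{M}\bar{\mathbf{S}}\rVert_2^2 \leq \sigma_1^2(\mathbf{M})\,\lVert\bar{\mathbf{S}}\rVert_2^2 \leq \sigma_1^2(\mathbf{M})\,\lVert\mathbf{S}\rVert_2^2 \leq \sigma_1^2(\mathbf{M})R^2$, and taking the maximum over the ball yields the right-hand inequality.

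The lower bound needs more care, and this is where I expect the only real obstacle to lie. The obvious candidate $\mathbf{S} = R\mathbf{v}_1$ fails, because $\mathbf{v}_1$ need not be orthogonal to $\vec{\textbf{1}}$ — although $\mathbf{M}\vec{\textbf{1}} = \vec{\textbf{1}}$, the vector $\vec{\textbf{1}}$ is in general not a singular vector of the non-symmetric matrix $\mathbf{M}$ — so the centering step $\mathbf{S}\mapsto\bar{\mathbf{S}}$ discards part of $\mathbf{v}_1$ and $\sigma_1^2$ cannot be recovered. To get around this I pass to the two-dimensional subspace $\mathcal{V} := \operatorname{span}\{\mathbf{v}_1,\mathbf{v}_2\}$: the single linear equation $\langle\,\cdot\,,\vec{\textbf{1}}\rangle = 0$ meets $\mathcal{V}$ in a subspace of dimension at least one, so there is a unit vector $\mathbf{w} = a\mathbf{v}_1 + b\mathbf{v}_2$ with $a^2 + b^2 = 1$ and $\mathbf{w}\perp\vec{\textbf{1}}$. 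Setting $\mathbf{S} := R\mathbf{w}$, which lies in the ball and has $\bar{\mathbf{S}} = \mathbf{S}$ because $\mathbf{w}\perp\vec{\textbf{1}}$, and using $\mathbf{M}\mathbf{v}_i = \sigma_i\mathbf{u}_i$ with $\mathbf{u}_1,\mathbf{u}_2$ orthonormal, I obtain
\[
\P(G,\mathbf{S}) \;=\; \lVert\mathbf{M}(R\mathbf{w})\rVert_2^2 \;=\; R^2\bigl(a^2\sigma_1^2(\mathbf{M}) + b^2\sigma_2^2(\mathbf{M})\bigr) \;\geq\; R^2\sigma_2^2(\mathbf{M}),
\]
the inequality using $\sigma_1 \geq \sigma_2$ and $a^2+b^2 = 1$. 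Hence $\max_{\lVert\mathbf{S}\rVert_2 \leq R}\P(G,\mathbf{S}) \geq \sigma_2^2(\mathbf{M})R^2$, which is the left-hand inequality.

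An equivalent, more structural route to the lower bound is Cauchy's interlacing theorem applied to the positive semi-definite matrix $\mathbf{M}^{T}\mathbf{M}$ (whose eigenvalues are $\sigma_1^2 \geq \sigma_2^2 \geq \cdots$): as $\mathbf{S}$ ranges over the ball, $\bar{\mathbf{S}} = \mathbf{C_n}\mathbf{S}$ ranges over exactly $\vec{\textbf{1}}^{\perp}$ intersected with the ball, so $\max_{\lVert\mathbf{S}\rVert_2\leq R}\P(G,\mathbf{S}) = R^2$ times the largest eigenvalue of $\mathbf{M}^{T}\mathbf{M}$ compressed to the codimension-one subspace $\vec{\textbf{1}}^{\perp}$, and interlacing says this compressed top eigenvalue is at least $\sigma_2^2(\mathbf{M})$. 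Either way, once the mismatch between the centering constraint $\bar{\mathbf{S}}\perp\vec{\textbf{1}}$ and the singular directions of $\mathbf{M}$ is handled, the rest is a direct application of the variational characterization of singular values.
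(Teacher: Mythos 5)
Your proof is correct and follows essentially the same route as the paper: both reduce to the variational characterization of the eigenvalues of $\mathbf{A}\mathbf{D}^{-2}\mathbf{A} = (\mathbf{D}^{-1}\mathbf{A})^{T}(\mathbf{D}^{-1}\mathbf{A})$ after observing that centering confines $\bar{\mathbf{S}}$ to the ball intersected with $\vec{\textbf{1}}^{\perp}$. Your explicit witness $\mathbf{w}\in\operatorname{span}\{\mathbf{v}_1,\mathbf{v}_2\}\cap\vec{\textbf{1}}^{\perp}$ for the lower bound is precisely the interlacing step the paper compresses into the phrase ``variational characterization of eigenvalues,'' and spelling it out is worthwhile since $\vec{\textbf{1}}$ is not in general a singular vector of $\mathbf{D}^{-1}\mathbf{A}$.
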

\begin{proof}
Using the Courant-Fischer theorem~\cite{horn2012matrix}, $\max_{\mathbf{S}, \lVert \mathbf{S} \rVert_2 \leq R} \P(G, \mathbf{S})$ can be written as 
\begin{equation*}
\max_{\mathbf{S}, \lVert \mathbf{S} \rVert_2 \leq R} \mathbf{\bar{S}^T} \mathbf{A} \mathbf{D}^{-2} \mathbf{A} \mathbf{\bar{S} }=\max_{\mathbf{S}, \lVert \mathbf{S} \rVert_2 \leq R, \mathbf{S} \perp \vec{\mathbf{1}}} \mathbf{S}^{T} \mathbf{A} \mathbf{D}^{-2} \mathbf{A} \mathbf{S}.
\end{equation*}
Now, let $\lambda_1 \geq \lambda_2 \geq \cdots \geq \lambda_{n}$ be the eigenvalues of $\mathbf{A} \mathbf{D}^{-2} \mathbf{A}$. Using the variational characterization of eigenvalues
\begin{align*}
\lambda_{2}(\mathbf{A} \mathbf{D}^{-2} \mathbf{A}) \mathbf{S^T} \mathbf{S} 
\leq \max_{\mathbf{S}, \lVert \mathbf{S} \rVert_2 \leq R} \P(G, \mathbf{S}) \leq \lambda_{1}(\mathbf{A} \mathbf{D}^{-2} \mathbf{A}) \mathbf{S^T} \mathbf{S}.
\end{align*}
The proof is done as $\lambda_{i}(\mathbf{A} \mathbf{D}^{-2} \mathbf{A})=\sigma_{i}^2(\mathbf{D}^{-1} \mathbf{A})$.
\end{proof}

Now, we consider $d$-regular graph $G$, where each node has degree $d$, i.e., we have $\textbf{D}=d\textbf{I}$ for some $d \in \mathbb{N}$, and so $\mathbf{A} \mathbf{D}^{-2} \mathbf{A}^T=(\mathbf{D}^{-1} \mathbf{A})^2$. It should be mentioned that $(\mathbf{D}^{-1} \mathbf{A})^2 \Vec{\textbf{1}} = \Vec{\textbf{1}}$, meaning that $\Vec{\textbf{1}}$ is an eigenvector of $(\mathbf{D}^{-1} \mathbf{A})^2$ with eigenvalue $1$, which is the maximum eigenvalue of $(\mathbf{D}^{-1} \mathbf{A})^2$.

\begin{theorem}\label{maximum_achievable_polarization_d_regular}
For any \( d \)-regular graph \( G \) and any opinion vector \(\textbf{S}\) where \( \| \textbf{S} \|_2 \leq R \), the maximum achievable perception gap is bounded by \( (\lambda_2(\mathcal{A}))^2 R^2 \), where \( \lambda_2(\mathcal{A}) \) is the second-largest eigenvalue of the normalized adjacency matrix \( \mathcal{A} \).
\end{theorem}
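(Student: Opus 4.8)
The plan is to specialize the matrix form~\eqref{matrix_fom} to the regular case and then run a spectral argument against the normalized adjacency matrix. Since $G$ is $d$-regular we have $\mathbf{D} = d\mathbf{I}$, hence $\mathbf{D}^{-1}\mathbf{A} = \tfrac1d\mathbf{A} = \mathbf{D}^{-1/2}\mathbf{A}\mathbf{D}^{-1/2} = \mathcal{A}$, which is symmetric, and $\mathbf{A}^{T}\mathbf{D}^{-2}\mathbf{A} = \mathcal{A}^{2}$. Substituting into~\eqref{matrix_fom} gives $\P(G,\mathbf{S}) = \mathbf{\bar S}^{T}\mathcal{A}^{2}\mathbf{\bar S} = \lVert \mathcal{A}\mathbf{\bar S}\rVert_2^{2}$, so the whole problem reduces to maximizing a Rayleigh quotient of $\mathcal{A}^{2}$.

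Next I would reduce to a constrained eigenvalue problem on $\vec{\mathbf 1}^{\perp}$, exactly as in the proof of Theorem~\ref{Polarization_Bounds}: because $\mathbf{\bar S} = \mathbf{C_n}\mathbf{S}$ satisfies $\mathbf{\bar S}^{T}\vec{\mathbf 1} = 0$ and $\lVert\mathbf{\bar S}\rVert_2 \le \lVert\mathbf{S}\rVert_2 \le R$ (orthogonal projection is non-expansive), we get
\[
\max_{\lVert\mathbf{S}\rVert_2\le R}\P(G,\mathbf{S}) \;=\; \max_{\mathbf{T}\perp\vec{\mathbf 1},\ \lVert\mathbf{T}\rVert_2\le R}\mathbf{T}^{T}\mathcal{A}^{2}\mathbf{T},
\]
where "$\le$" is all we need and "$\ge$" holds by taking $\mathbf{T}=\mathbf{\bar S}$.

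Then I would invoke the spectral decomposition $\mathcal{A} = \sum_i \lambda_i q_i q_i^{T}$ with orthonormal $q_i$ and $1 = \lambda_1 \ge \lambda_2 \ge \cdots \ge \lambda_n$, using the observation already recorded before the statement that for a $d$-regular graph $\vec{\mathbf 1}$ is the top eigenvector of $\mathcal{A}$ (equivalently of $\mathcal{A}^2$) with $\lambda_1 = 1$. Expanding any $\mathbf{T}\perp\vec{\mathbf 1}$ in the basis $q_2,\dots,q_n$ yields $\mathbf{T}^{T}\mathcal{A}^{2}\mathbf{T} = \sum_{i\ge 2}\lambda_i^{2}\langle q_i,\mathbf{T}\rangle^{2} \le \bigl(\max_{i\ge 2}\lambda_i^{2}\bigr)\lVert\mathbf{T}\rVert_2^{2} \le R^{2}\max_{i\ge 2}\lambda_i^{2}$, and the claimed bound $(\lambda_2(\mathcal{A}))^{2}R^{2}$ follows once $\lambda_2^{2} = \max_{i\ge 2}\lambda_i^{2}$; choosing $\mathbf{T}=R\,q_2$ shows it is tight in that regime.

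The main obstacle is precisely this last identification: $\max_{i\ge 2}\lambda_i^{2} = \lambda_2^{2}$ requires $\lambda_2 \ge |\lambda_n|$, which can fail when the most negative eigenvalue dominates (e.g.\ $\lambda_n = -1$ for a bipartite $d$-regular graph, where the correct bound degenerates to $R^{2}$). I would resolve this by either (i) writing the bound as $\max\{\lambda_2^{2},\lambda_n^{2}\}\,R^{2}$, (ii) adding a mild non-bipartiteness assumption (or $\lambda_2 \ge -\lambda_n$), or (iii) reading $\lambda_2(\mathcal{A})$ as the second-largest eigenvalue in absolute value; under any of these the argument above closes cleanly. Everything else — the passage to $\vec{\mathbf 1}^{\perp}$ and the Rayleigh-quotient estimate — is routine given~\eqref{matrix_fom} and the Courant–Fischer machinery already used for Theorem~\ref{Polarization_Bounds}.
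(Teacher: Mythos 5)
Your proof takes essentially the same route as the paper's: specialize to $\mathbf{D}=d\mathbf{I}$ so that $\P(G,\mathbf{S})=\bar{\mathbf{S}}^{T}\mathcal{A}^{2}\bar{\mathbf{S}}$, expand $\bar{\mathbf{S}}$ in the eigenbasis of $\mathcal{A}^{2}$ orthogonal to $\vec{\mathbf{1}}$, and bound the resulting quadratic form by the second eigenvalue. The caveat you flag is genuine and is in fact glossed over in the paper, whose final step silently identifies $\lambda_2(\mathcal{A}^{2})$ with $(\lambda_2(\mathcal{A}))^{2}$; this needs $\lambda_2(\mathcal{A})\ge|\lambda_n(\mathcal{A})|$, which can fail (the self-loop convention keeps $\lambda_n>-1$ but does not prevent $|\lambda_n|>\lambda_2$, e.g.\ for complete-bipartite-like regular graphs), so the unconditionally correct bound is $\max\{\lambda_2(\mathcal{A})^{2},\lambda_n(\mathcal{A})^{2}\}R^{2}$ and your repair (i), or reading $\lambda_2$ as second-largest in absolute value as in the expander literature, is the right fix.
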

\begin{proof}
As \( (\mathbf{D}^{-1} \mathbf{A})^2 \) is a symmetric positive semi-definite matrix, it can be written as 
$
(\mathbf{D}^{-1} \mathbf{A})^2 = \sum_{j=1}^n \lambda_j q_j q_j^T,
$
where \((\lambda_j, q_j)\) is the \(j\)-th eigenpair of \( (\mathbf{D}^{-1} \mathbf{A})^2 \) with \( 1 = \lambda_1 \geq \lambda_2 \geq \cdots \geq \lambda_n \). Since the set of eigenvectors forms a basis for \( \mathbb{R}^n \) and \( \bar{\textbf{S}} \perp q_1 (= \Vec{\textbf{1}}) \), we have \( \bar{\textbf{S}} = \sum_{j=2}^n \alpha_j q_j \).
Using the variational characterization of eigenvalues, we obtain 
$
\max_{\substack{\mathbf{S}, \lVert \mathbf{S} \rVert_2 \leq R}} \mathcal{P}(G, \mathbf{S}) = \max_{\substack{\mathbf{S}, \lVert \mathbf{S} \rVert_2 \leq R}} \sum_{j=2}^n \lambda_j \alpha_j^2 \leq \lambda_2\big((\mathbf{D}^{-1} \mathbf{A})^2\big) \mathbf{S}^T \mathbf{S} = (\lambda_2(\mathcal{A}))^2 R^2.$
\end{proof}

The value of $\lambda_2(\mathcal{A})$ is an algebraic representation of graph connectivity, where the smaller $\lambda_2(\mathcal{A})$ implies better connectivity~\cite{hoory2006expander}. In other words,
\begin{graybox}
\centering{
Better-connected networks are more resilient against the perception gap.}   
\end{graybox}

It is worth mentioning that this bound achieves value $0$ for the complete graph $K_n$ since $\lambda_2(\mathcal{A}_{K_n})=0$ and it is easy to check that $\mathcal{P}(K_n, \textbf{S})=0$, regardless of the opinion vector $\textbf{S}$.

Our results reveal two key insights about the perception gap. (1) Structural origin: Unlike variance-based polarization~\cite{musco2018minimizing}, the perception gap stems from spectral properties ($\sigma_2(\mathbf{D}^{-1}\mathbf{A})$), echoing~\cite{lerman2016majority}'s observation that degree heterogeneity drives local-global discrepancies. (2) Expander resilience: Theorem ~\ref{maximum_achievable_polarization_d_regular}
supports~\cite{hoory2006expander}'s claim that expanders resist distortion. Complete graphs ($\lambda_2=0$) yield $\mathcal{P}=0$, demonstrating perfect resilience, previously seen in consensus dynamics~\cite{olfati2007consensus} but not formally proved.

\subsection{Random Distribution}

Now, we consider the setup where each opinion $s_i$  follows a standard Gaussian distribution with mean zero and variance ($\sigma_{s_i}^2) =1$. 
This is a popular and realistic set-up often considered by prior work, cf.~\cite{wang2024relationship}. In the following, we provide a formula for the expected value of the perception gap in a graph $G$ in terms of its degree distribution.
\begin{theorem}\label{expected_value}.
For a graph $ G = (V, E) $ and a random opinion vector $ \textbf{S} $ following a standard Gaussian distribution,  the expected value of the perception gap is given by 
\begin{equation}\mathbb{E}\bigg(\P(G,\textbf{S})\bigg)=\bigg(\sum_{v_i \in V} \frac{1}{\deg(v_i)}\bigg) - 1. 
\end{equation}
where $\deg(v_i):=d_i$ denotes the degree of node $v_i$. 
\end{theorem}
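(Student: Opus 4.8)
The plan is to combine linearity of expectation with the quadratic‑form identity: if the coordinates of $\mathbf{S}$ are i.i.d.\ with mean $0$ and variance $1$ (in particular standard Gaussian), then $\mathbb{E}[\mathbf{S}^T\mathbf{M}\mathbf{S}]=\operatorname{tr}(\mathbf{M})$ for every fixed symmetric $\mathbf{M}$, since $\mathbb{E}[\mathbf{S}\mathbf{S}^T]=\mathbf{I}$. Starting from the matrix form~\eqref{matrix_fom} and writing $\bar{\mathbf{S}}=\mathbf{C_n}\mathbf{S}$ with $\mathbf{C_n}$ symmetric and idempotent, this gives
\begin{equation*}
\mathbb{E}\big(\P(G,\mathbf{S})\big)=\mathbb{E}\big[\mathbf{S}^T\mathbf{C_n}\mathbf{A}\mathbf{D}^{-2}\mathbf{A}\mathbf{C_n}\mathbf{S}\big]=\operatorname{tr}\big(\mathbf{C_n}\mathbf{A}\mathbf{D}^{-2}\mathbf{A}\mathbf{C_n}\big)=\operatorname{tr}\big(\mathbf{A}\mathbf{D}^{-2}\mathbf{A}\,\mathbf{C_n}\big),
\end{equation*}
where the last step uses the cyclic invariance of the trace and $\mathbf{C_n}^2=\mathbf{C_n}$.

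Next I would expand $\mathbf{C_n}=\mathbf{I}-\tfrac1n\Vec{\mathbf{1}}\Vec{\mathbf{1}}^T$ and handle the two resulting traces separately. For the first, cyclicity gives $\operatorname{tr}(\mathbf{A}\mathbf{D}^{-2}\mathbf{A})=\operatorname{tr}(\mathbf{D}^{-2}\mathbf{A}^2)$; since the entries of $\mathbf{A}$ lie in $\{0,1\}$ and row $i$ contains exactly $d_i$ ones (the self‑loop convention), $(\mathbf{A}^2)_{ii}=\sum_j\mathbf{A}_{ij}^2=d_i$, so this trace equals $\sum_i d_i/d_i^2=\sum_{v_i\in V}1/d_i$. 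For the second, I would use the identity $\mathbf{A}\Vec{\mathbf{1}}=\mathbf{D}\Vec{\mathbf{1}}$ (valid precisely because of the self‑loops): it yields $\mathbf{A}\mathbf{D}^{-2}\mathbf{A}\Vec{\mathbf{1}}=\mathbf{A}\mathbf{D}^{-1}\Vec{\mathbf{1}}$, hence $\Vec{\mathbf{1}}^T\mathbf{A}\mathbf{D}^{-2}\mathbf{A}\Vec{\mathbf{1}}=(\mathbf{A}\Vec{\mathbf{1}})^T\mathbf{D}^{-1}\Vec{\mathbf{1}}=\Vec{\mathbf{1}}^T\mathbf{D}\mathbf{D}^{-1}\Vec{\mathbf{1}}=n$; using $\operatorname{tr}(\mathbf{M}\Vec{\mathbf{1}}\Vec{\mathbf{1}}^T)=\Vec{\mathbf{1}}^T\mathbf{M}\Vec{\mathbf{1}}$, the second term contributes $-\tfrac1n\cdot n=-1$. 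Adding the two pieces gives $\big(\sum_{v_i\in V}1/d_i\big)-1$, as claimed.

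A more elementary route, which I would at least note, avoids matrices altogether: from Definition~\ref{def_polarization}, write $\tfrac{\hat s_i}{d_i}-\tfrac1n\sum_j s_j=\sum_j c_{ij}s_j$ with $c_{ij}=\tfrac{\mathbf{A}_{ij}}{d_i}-\tfrac1n$; independence and unit variance give $\mathbb{E}\big[(\tfrac{\hat s_i}{d_i}-\tfrac1n\sum_j s_j)^2\big]=\sum_j c_{ij}^2=\tfrac1{d_i}-\tfrac2n+\tfrac1n=\tfrac1{d_i}-\tfrac1n$, and summing over $i$ again yields $\sum_{v_i\in V}1/d_i-1$. I do not expect a genuine obstacle; the only points needing care are the bookkeeping forced by the conceptual self‑loops (which is exactly what makes $\sum_j\mathbf{A}_{ij}=d_i$, $(\mathbf{A}^2)_{ii}=d_i$, and $\mathbf{A}\Vec{\mathbf{1}}=\mathbf{D}\Vec{\mathbf{1}}$) and verifying that the mean‑centering term collapses to the clean constant $-1$ rather than leaving an $n$‑dependent remainder.
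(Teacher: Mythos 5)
Your proposal is correct, and your primary (trace-based) argument is genuinely different from the paper's. The paper proceeds term by term: it defines $X_i = \bigl(\hat{s}_i/d_i - \tfrac{1}{n}\sum_j s_j\bigr)^2$, expands the square into diagonal and cross terms, kills the cross terms using $\mathbb{E}(s_j s_k)=0$, and obtains $\mathbb{E}(X_i)=\tfrac{1}{d_i}-\tfrac{1}{n}$ before summing — this is essentially your ``more elementary route,'' just written with explicit double sums instead of the compact coefficient vector $c_{ij}=\mathbf{A}_{ij}/d_i - 1/n$. Your main route instead works entirely at the matrix level: $\mathbb{E}[\mathbf{S}^T\mathbf{M}\mathbf{S}]=\operatorname{tr}(\mathbf{M})$ via $\mathbb{E}[\mathbf{S}\mathbf{S}^T]=\mathbf{I}$, then cyclicity and idempotence of $\mathbf{C_n}$ reduce everything to $\operatorname{tr}(\mathbf{A}\mathbf{D}^{-2}\mathbf{A}) - \tfrac{1}{n}\Vec{\mathbf{1}}^T\mathbf{A}\mathbf{D}^{-2}\mathbf{A}\Vec{\mathbf{1}} = \sum_i 1/d_i - 1$, where both pieces are computed correctly from the self-loop conventions $(\mathbf{A}^2)_{ii}=d_i$ and $\mathbf{A}\Vec{\mathbf{1}}=\mathbf{D}\Vec{\mathbf{1}}$. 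The trace argument is cleaner and makes transparent the fact (which the paper only asserts in a remark after the theorem) that the Gaussian assumption is irrelevant — only zero mean and unit variance of the i.i.d.\ entries are used; the paper's expansion buys a more self-contained, elementary computation but obscures that generality slightly. Both derivations land on the same per-node identity $\tfrac{1}{d_i}-\tfrac{1}{n}$, so the agreement is complete.
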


\begin{proof}
Let's define a random variable $X_i$ as
\begin{equation*}
    X_i := \left( \frac{\sum_{v_j \in N[v_i]} s_j}{d_i} - \frac{\sum_{j=1}^{n} s_j}{n} \right)^2
\end{equation*}
This can then be simplified as 
\begin{align*}
    X_i &= \underbrace{\frac{1}{d_i^{2}} \bigg(\sum_{v_j \in N[v_i]} s_j\bigg)^2 +\frac{1}{n^2} \bigg(\sum_{j=1}^{n} s_j\bigg)^2}_{\mathcal{I}}- \underbrace{\frac{2}{d_i n}  \bigg( \sum_{v_j \in N[v_i]} s_j \bigg) \bigg( \sum_{j=1}^{n} s_j \bigg)}_{\mathcal{II}}.
\end{align*}
The first term can be rewritten as 
\begin{align*}
    \mathcal{I} &:= \frac{1}{d_i^2} \left( \sum_{v_j \in N[v_i]} s_j^2 + 2 \sum_{\substack{v_j,v_k \in N[v_i] \\ j > k}} s_j s_k \right) + \frac{1}{n^2} \left( \sum_{j=1}^{n} s_j^2 + 2 \sum_{\substack{j,k=1 \\ j > k}}^n s_j s_k \right),
\end{align*}
and the second term can be reformulated as
\begin{equation*}
    \mathcal{II}:=- \frac{2}{d_i n} \bigg( \sum_{v_j \in N[v_i]} s_j \bigg) \bigg( \sum_{v_j \in N[v_i]} s_j + \sum_{v_j \in V \setminus N[v_i]} s_j \bigg)
\end{equation*}
or 
\begin{align*}
    \mathcal{II} &:= - \frac{2}{d_i n} \Bigg( \sum_{v_j \in N[v_i]} s_j^2 + 2 \sum_{\substack{v_j,v_k \in N(i) \\ j > k}} s_j s_k + \sum_{v_j \in N[v_i]} s_j \sum_{v_j \in V \setminus N[v_i]} s_j \Bigg).
\end{align*}
Since the random variables $ s_i $, for $ i = 1, \dots, n $, are chosen independently and identically distributed (i.i.d.), we have $ \mathbb{E}(s_i s_j) = \mathbb{E}(s_i) \mathbb{E}(s_j) = 0 $ for $i \neq j$ and $\mathbb{E}({s_i}^2) = \text{Var}(s_i) +  (\mathbb{E}({s_i}))^2 = 1 - 0^2 = 1$. Using linearity of expectation, the expected value of the random variable $\mathcal{I}$ will be given as

\begin{equation*}
\mathbb{E}(\mathcal{I}) := \frac{1}{d_i^2}  |N[v_i]| 
    + \frac{1}{n^2} n= \frac{1}{d_i}+\frac{1}{n}, 
\end{equation*}
Similarly, 
\begin{equation*}
\mathbb{E}(\mathcal{II}) := - \frac{2}{d_i n} |N[v_i]|=- \frac{2}{ n}.  
\end{equation*}
So the expected value of $X_i$ is given as $\mathbb{E}(X_i)=\frac{1}{d_i}-\frac{1}{n}$, and the proof follows by summing over all nodes.
\end{proof}

\textbf{Note.} We adopt Gaussian opinions in Theorem~\ref{expected_value} since this assumption is widely used in the literature and allows analytical tractability. However, the result is not restricted to the Gaussian case—it remains valid for any i.i.d. opinion distribution with zero mean and unit variance, including bounded settings such as the uniform distribution on $[-1,1]$. Thus, the derived expression also provides useful insights for bounded-opinion scenarios considered in our model.

This theorem implies that increasing the degree of nodes helps reduce the expected perception gap. (This, in particular, returns a value $0$ for a complete graph $K_n$, which is exactly as discussed above.) This is intuitive, since if a node has more neighbors, it's more likely that its local view of opinions matches the global one. On the other hand, 
\begin{graybox}
\begin{center}
Graphs with many low-degree nodes, such as real-world networks following a power-law degree distribution, have a larger expected perception gap.
\end{center}
\end{graybox}

This result highlights the critical role of degree distribution in shaping the perception gap, reinforcing the idea that increasing connectivity, especially for low-degree nodes, can substantially mitigate local-global opinion discrepancies in networks. This aligns with the majority illusion phenomenon~\cite{lerman2016majority}, in which degree heterogeneity leads local perceptions to systematically deviate from the global reality.

\subsection{The Stochastic Block Model Analysis}
\label{sec:sbm}
We now analyze a special case of stochastic block models (SBM)~\cite{abbe2018community}, often used to synthesize real-world networks.
It is defined by $2n$ nodes, and $p, q \in [0,1]$. Nodes are divided into groups $V_+ = \{v_1, \ldots, v_n\}$ and $V_- = \{v_{n+1}, \ldots, v_{2n}\}$, with intra-group edges formed with probability $p$ and inter-group edges with probability $q$ ($q < p$) reflecting the $\textit{homophily}$ principle~\cite{dandekar2013biased}. Nodes in $V_+$ hold opinions $+1$, and those in $V_-$ hold $-1$. 
The expected adjacency matrix $\bar{\textbf{A}}$ is a $2n \times 2n$ block matrix given as follows (where the diagonal entries are $1$ as we reinforce the self-loop). 
\begin{equation}\label{adjacency_A}
\bar{\textbf{A}} = \left[ 
\begin{array}{cccc|cccc}
\cellcolor{cyan!20} 1 & \cellcolor{cyan!20} p & \cellcolor{cyan!20} \cdots & \cellcolor{cyan!20} p & \cellcolor{red!20} q & \cellcolor{red!20} q & \cellcolor{red!20} \cdots & \cellcolor{red!20} q \\
\cellcolor{cyan!20} p & \cellcolor{cyan!20} 1 & \cellcolor{cyan!20} \cdots & \cellcolor{cyan!20} p & \cellcolor{red!20} q & \cellcolor{red!20} q & \cellcolor{red!20} \cdots & \cellcolor{red!20} q \\
\cellcolor{cyan!20} \vdots & \cellcolor{cyan!20} \vdots & \cellcolor{cyan!20} \ddots & \cellcolor{cyan!20} \vdots & \cellcolor{red!20} \vdots & \cellcolor{red!20} \vdots & \cellcolor{red!20} \ddots & \cellcolor{red!20} \vdots \\
\cellcolor{cyan!20} 1 & \cellcolor{cyan!20} p & \cellcolor{cyan!20} \cdots & \cellcolor{cyan!20} 1 & \cellcolor{red!20} q & \cellcolor{red!20} q & \cellcolor{red!20} \cdots & \cellcolor{red!20} q \\
\hline
\cellcolor{red!20} q & \cellcolor{red!20} q & \cellcolor{red!20} \cdots & \cellcolor{red!20} q & \cellcolor{cyan!20} 1 & \cellcolor{cyan!20} p & \cellcolor{cyan!20} \cdots & \cellcolor{cyan!20} p \\
\cellcolor{red!20} q & \cellcolor{red!20} q & \cellcolor{red!20} \cdots & \cellcolor{red!20} q & \cellcolor{cyan!20} p & \cellcolor{cyan!20} 1 & \cellcolor{cyan!20} \cdots & \cellcolor{cyan!20} p \\
\cellcolor{red!20} \vdots & \cellcolor{red!20} \vdots & \cellcolor{red!20} \ddots & \cellcolor{red!20} \vdots & \cellcolor{cyan!20} \vdots & \cellcolor{cyan!20} \vdots & \cellcolor{cyan!20} \ddots & \cellcolor{cyan!20} \vdots \\
\cellcolor{red!20} q & \cellcolor{red!20} q & \cellcolor{red!20} \cdots & \cellcolor{red!20} q & \cellcolor{cyan!20} p & \cellcolor{cyan!20} p & \cellcolor{cyan!20} \cdots & \cellcolor{cyan!20} 1
\end{array} 
\right]_{2n \times 2n}
\end{equation}

The following theorem provides a closed-form expression for the perception gap in an SBM graph. 
\begin{theorem}\label{polarization_SBM}
Let $G = (V, E)$ be an SBM graph with two blocks representing a fully polarized society, where one group holds an opinion of $+1$ and the other $-1$. The perception gap in this case is given by
\begin{equation}
\P(G,\mathbf{S}) = \left( \frac{p - q + \frac{1 - p}{n}}{p + q + \frac{1 - p}{n}} \right)^2 2n.
\end{equation}
\end{theorem}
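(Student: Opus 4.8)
The plan is to carry out a direct computation using the matrix form $\mathcal{P}(G,\mathbf{S}) = \lVert \mathbf{D}^{-1}\mathbf{A}\bar{\mathbf{S}}\rVert_2^2$ from Equation~\eqref{matrix_fom}, instantiated at the expected adjacency matrix $\bar{\mathbf{A}}$ of Equation~\eqref{adjacency_A} and the fully polarized opinion vector. The first observation is that this opinion vector is already mean-centered: since $|V_+| = |V_-| = n$ and opinions are $+1$ on $V_+$ and $-1$ on $V_-$, the global average over all $2n$ nodes is $0$, hence $\bar{\mathbf{S}} = \mathbf{S}$. Consequently $\mathcal{P}(G,\mathbf{S}) = \sum_{i=1}^{2n}\bigl(\hat{s}_i/d_i\bigr)^2$, and it suffices to evaluate $d_i$ and $\hat{s}_i$ for every node.

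Next I would read the (expected) degree off a row of $\bar{\mathbf{A}}$: each node contributes $1$ from its conceptual self-loop, has exactly $n-1$ intra-block entries of weight $p$, and exactly $n$ inter-block entries of weight $q$, so $d_i = 1 + (n-1)p + nq$ for all $i$. Dividing by $n$ and simplifying gives $d_i = n\bigl(p + q + \tfrac{1-p}{n}\bigr)$, which already reproduces the denominator in the claimed expression.

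Then I would compute $\hat{s}_i = \sum_{k}\bar{\mathbf{A}}_{ik}s_k$. For $v_i \in V_+$ the closed-neighborhood sum picks up $+1$ from the self-loop, $(n-1)p$ from intra-block neighbors (all holding $+1$), and $-nq$ from inter-block neighbors (all holding $-1$), so $\hat{s}_i = 1 + (n-1)p - nq = n\bigl(p - q + \tfrac{1-p}{n}\bigr)$. By the symmetry of the construction under swapping the two blocks (which simultaneously flips every opinion), for $v_i \in V_-$ we obtain $\hat{s}_i = -\,n\bigl(p - q + \tfrac{1-p}{n}\bigr)$. In either case $\bigl(\hat{s}_i/d_i\bigr)^2 = \Bigl(\frac{p-q+(1-p)/n}{p+q+(1-p)/n}\Bigr)^2$, and summing this identical contribution over all $2n$ nodes yields $\mathcal{P}(G,\mathbf{S}) = \Bigl(\frac{p-q+(1-p)/n}{p+q+(1-p)/n}\Bigr)^2\, 2n$, which is the stated formula.

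As for obstacles: the computation is elementary, so the only genuine subtlety is bookkeeping. The key point is to keep the diagonal of $\bar{\mathbf{A}}$ at weight $1$ (the conceptual self-loop) rather than $p$ — this is precisely what generates the $\tfrac{1-p}{n}$ correction terms — and to use the correct counts $n-1$ (intra-block, excluding the node itself) versus $n$ (inter-block). One should also note that the argument runs on the \emph{expected} adjacency matrix, so the statement concerns the expected SBM (a weighted complete graph); recovering an analogous statement for a single random draw of $G$ would require an additional concentration step on $\mathbf{A} - \bar{\mathbf{A}}$, but the theorem as phrased needs only the deterministic computation above.
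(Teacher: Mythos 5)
Your proof is correct and follows essentially the same route as the paper: both work with the expected adjacency matrix $\bar{\mathbf{A}}$, observe that the fully polarized vector is already mean-centered so $\bar{\mathbf{S}}=\mathbf{S}$, and compute $d_i = n(p+q)+1-p$ and $\hat{s}_i = \pm\left(n(p-q)+1-p\right)$ (the paper phrases this as $\mathbf{S}$ being an eigenvector of $\bar{\mathbf{A}}$, while you verify it coordinatewise, which is the same calculation). Your closing remark that the statement really concerns the expected SBM and that a single random draw would need a concentration argument is a fair and worthwhile observation that the paper leaves implicit.
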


\begin{proof}
In this setup, we have $\mathbf{\bar{S}} = \mathbf{S}$, and hence $\bar{\mathbf{A}}\mathbf{S} = (n(p-q) + 1 - p)\mathbf{S}$. Consequently the degree matrix is given by $\mathbf{D} = \operatorname{diag}(\bar{\mathbf{A}}\Vec{\mathbf{1}}) = (n(p+q) + 1 - p)\mathbf{I}$. Therefore, 
\begin{align*}
\P(G, \mathbf{S}) 
&= \left\lVert \mathbf{D}^{-1} \mathbf{A} \mathbf{\bar{S}} \right\rVert_2^2 
= \left\lVert \frac{n(p-q)+1-p}{n(p+q)+1-p} \mathbf{S} \right\rVert_2^2 \\
\end{align*}
which completes the proof as $\left\lVert \mathbf{S} \right\rVert_2^2 = 2n$.
\end{proof}

For large $n$, which is typical in real-world networks, the terms $\frac{1-p}{n}$ in both the numerator and denominator of Theorem~\ref{polarization_SBM} are negligible compared to $p$ and $q$. The perception gap thus simplifies as
\[
\P(G, \mathbf{S}) \approx \left( \frac{p-q}{p+q} \right)^2 2n.
\]
This form shows that, for sufficiently large $n$, the perception gap is primarily determined by the relative difference $p-q$, i.e., the gap between the strength of intra-community and inter-community connectivity. So as the communities become more separated (larger $p-q$), local nodes' perception diverges further from the global average.

Recent work analyzing FJ dynamics on SBM graphs ~\cite{chitra2020analyzing} shows that standard polarization, the variance of final expressed opinions, is essentially independent of the intra-community edge probability $p$. This is counterintuitive, as increasing $p$ (stronger within-community links) does not alter the polarization value, even though it increases the separation of communities. Our perception gap index, by contrast, directly captures the effect of both $p$ and $q$. As $p-q$ increases, indicating stronger community structure amplifies opinion discrepancies while standard polarization stays unchanged. Thus, our metric complements variance-based polarization by quantifying the influence of network modularity on opinion divergence, revealing structural effects masked by existing measures. This aligns with prior work on how homophily and modularity distort perceptions and increase polarization~\cite{lerman2016majority, chitra2020analyzing, pariser2011filter}.


\section{
Perception Gap Minimization via Edge Addition
}\label{link_recommendation_section3}

In this section, we study the problem of strategically adding a limited number of edges to minimize the perception gap, thereby mitigating systematic misperceptions and fostering a more accurate understanding of the global truth. The same algorithmic study has been applied to other adversarial phenomena like polarization~\cite{springsteen2024algorithmic,cinus2022effect,garimella2017reducing,auletta2024mitigate,interian2020reducing} and majority illusion~\cite{grandi2022majority, fioravantes2025eliminating}. 

\begin{problem}[\text{Perception Gap Minimization via Edge Addition}]\label{minkpolarization}
Given a graph $G = (V, E)$, an opinion vector $\textbf{S}$,  the set of missing edges $\mathcal{E}_C = \big\{ \{u, v\} : \{u, v\} \notin E \big\}$, and an integer $k $, find a subset $T \subseteq \mathcal{E}_C$ with $|T| = k$, such that adding the edges in $T$ to $G$, forming the new graph $G + T = (V, E \cup T)$, minimizes perception gap, i.e., 

\begin{equation}
    \min_{T \subseteq \mathcal{E}_C, \hspace{2mm} |T| = k} f(T), \quad f(T)=\P(G+T,\textbf{S}).
\end{equation}

\end{problem}

In this section, we show that this problem is computationally hard; more precisely, there is no polynomial-time algorithm that can guarantee a bounded approximation ratio unless P = NP (Theorem~\ref{thm:np-hardness}) and its objective function is neither monotone \footnote{A set function $f: 2^N \to \mathbb{R}$ is \textit{monotonically decreasing} if for all subsets $A \subseteq B \subseteq N$, the function satisfies $f(B) \leq f(A)$. 
} nor supermodular \footnote{
A set function $f: 2^N \to \mathbb{R}$ is \textit{supermodular} if for all subsets  $A \subseteq B \subseteq N$ and for all elements $x \notin B$, $
f(A \cup \{x\}) - f(A) \leq f(B \cup \{x\}) - f(B)$. 
} (Lemma~\ref{monotonicity_submodularity}). Before the main results, we present a lemma giving a closed-form expression for how adding a link alters the perception gap.
\begin{lemma}\label{Evolution_thorem}
Given an opinion vector $\mathbf{S}$ and a social network $G = (V, E)$ represented by the adjacency matrix $\mathbf{A}$ and degree matrix $\mathbf{D}$, let $G_{+e} = (V, E \cup \{e\})$ denotes the updated social network formed by adding a new edge $e = \{v_i, v_j\} \notin E$. The change in perception gap caused by adding edge $e$, i.e, $\P(G_{+e},\textbf{S}) - \P(G,\textbf{S})$
is given by
$\left[ \left( \frac{\hat{s}_i + s_{j}}{d_{i}+1} - \bar{s} \right)^2 - \left( \frac{\hat{s}_i}{d_{i}} - \bar{s} \right)^2\right] + \left[ \left( \frac{\hat{s}_j + s_{i}}{d_{j}+1} - \bar{s} \right)^2 - \left( \frac{\hat{s}_j}{d_{j}} - \bar{s} \right)^2\right].$
\end{lemma}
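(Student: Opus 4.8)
The plan is to work directly from the summation form of Definition~\ref{def_polarization}, namely $\P(G,\textbf{S}) = \sum_{k=1}^{n}\bigl(\hat{s}_k/d_k - \bar s\bigr)^2$ with $\bar s = \tfrac1n\sum_{j} s_j$, and to track exactly which summands change when the single edge $e=\{v_i,v_j\}$ is inserted. The first step is the observation that the opinion vector $\textbf{S}$ is untouched by the operation, so the global average $\bar s$ is identical in $G$ and $G_{+e}$; this is what lets us compare the two sums term by term against a common reference value.

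Next I would argue the locality of the change: adding $e=\{v_i,v_j\}$ inserts $v_j$ into $N(v_i)$ and $v_i$ into $N(v_j)$, and does nothing to $N(v_k)$ for any $k \notin\{i,j\}$ (a node's neighborhood only changes if it is an endpoint of the new edge). Hence for every $k \notin\{i,j\}$ both $\hat{s}_k$ and $d_k = |N[v_k]|$ are unchanged, so those summands are literally the same in $\P(G_{+e},\textbf{S})$ and $\P(G,\textbf{S})$ and cancel in the difference. For $k=i$: since $e\notin E$ and $i\neq j$, we have $v_j\notin N[v_i]$ before the insertion, so afterward $N[v_i]$ gains exactly the one element $v_j$, giving $\hat{s}_i \mapsto \hat{s}_i + s_j$ and $d_i \mapsto d_i+1$; symmetrically $\hat{s}_j \mapsto \hat{s}_j + s_i$ and $d_j \mapsto d_j+1$. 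Substituting into the two surviving summands and subtracting the original two yields
\[
\P(G_{+e},\textbf{S}) - \P(G,\textbf{S}) = \left[ \left( \tfrac{\hat{s}_i + s_{j}}{d_{i}+1} - \bar{s} \right)^2 - \left( \tfrac{\hat{s}_i}{d_{i}} - \bar{s} \right)^2\right] + \left[ \left( \tfrac{\hat{s}_j + s_{i}}{d_{j}+1} - \bar{s} \right)^2 - \left( \tfrac{\hat{s}_j}{d_{j}} - \bar{s} \right)^2\right],
\]
which is the claimed expression.

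The calculation itself is routine; the only point requiring genuine care — the ``main obstacle'' such as it is — is the bookkeeping around the conceptual self-loops in the closed-neighborhood convention. One must be explicit that $d_i = |N[v_i]|$ counts $v_i$ itself, that the hypothesis $e=\{v_i,v_j\}\notin E$ together with $i\neq j$ guarantees $v_j$ was genuinely absent from $N[v_i]$ (so the increment of $d_i$ by $1$ and the additive update $\hat{s}_i \mapsto \hat{s}_i + s_j$ are correct rather than double-counting), and that no self-loop edge is being added. Once these conventions are pinned down, the term-by-term cancellation is immediate and no algebraic simplification of the bracketed quantities is needed.
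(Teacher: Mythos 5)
Your proposal is correct and follows essentially the same route as the paper's proof: write the difference as a term-by-term sum, observe that only the two endpoint summands change (with $\hat{s}_i \mapsto \hat{s}_i + s_j$, $d_i \mapsto d_i + 1$ and symmetrically for $j$), and let the rest cancel. Your extra care about the closed-neighborhood convention and the fact that $\bar{s}$ is unchanged is a welcome clarification but does not alter the argument.
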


\begin{proof}
Using the notation $\bar{s}_k$ and $\bar{s}_k'$ for the local perspective in the original graph, $G$, and augmented graph, $G_{+e}$, of node $v_k$, and following the straightforward calculations show that  
\begin{align*}
\Delta_{+e} \mathbf{P} &= \P(G_{+e},\textbf{S}) - \P(G,\textbf{S}) \\
&= \sum_{k=1}^{n} \left( \frac{\bar{s}_k'}{d_k'} - \bar{s} \right)^2 - \sum_{k=1}^{n} \left( \frac{\bar{s}_k}{d_{k}} - \bar{s} \right)^2 \\
&= \sum_{k=1}^{n} \left[ \left( \frac{\bar{s}_k'}{d_{k}'} - \bar{s} \right)^2 - \left( \frac{\bar{s}_k}{d_{k}} - \bar{s} \right)^2 \right] \\
&= \sum_{k=1}^{n} \left[ \left( \frac{\bar{s}_k'}{d_{k}'} - \bar{s} \right)^2 - \left( \frac{\bar{s}_k}{d_{k}} - \bar{s} \right)^2 \right].
\end{align*}
The proof is completed by following the observations below
\begin{equation*}
\bar{s}_k' = 
\begin{cases}
\bar{s}_k + s_j, &\text{if $k=i$}\\
\bar{s}_k+ s_i, &\text{if $k=j$}\\
\bar{s}_k, & \text{otherwise.} 
\end{cases}, \quad 
d_{k}' = 
\begin{cases}
d_{k} + 1, &\text{if $k=i,j$}\\
d_{k}, & \text{o.w.} 
\end{cases}
\end{equation*}
for $ k = 1,2, \cdots, n $.
\end{proof}
We also provide an example for better understanding the implications of this lemma. 
Consider Figure~\ref{example_reduce}, where node labels represent opinion values. (Please note, as mentioned before, although self-loops are not explicitly shown in the graph, they are always present to make $\mathbf{D^{-1}}$ well-defined for all kinds of graphs.)
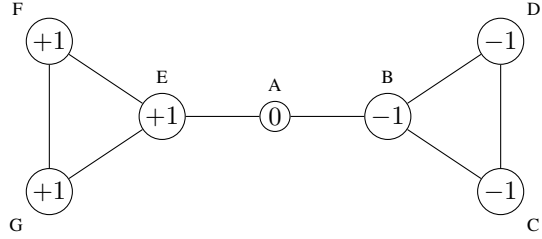
\begin{figure}
\begin{center}
\begin{tikzpicture}[
    vertex/.style={circle, draw, minimum size=4mm, inner sep=1pt},
    label/.style={font=\scriptsize},
    every edge/.append style={thick}
]
\node[vertex] (A) at (0,0) {$0$};
\node[vertex] (B) at (1.5,0) {$-1$};
\node[vertex] (C) at (3,-1) {$-1$};
\node[vertex] (D) at (3,1) {$-1$};
\node[vertex] (E) at (-1.5,0) {$+1$};
\node[vertex] (F) at (-3,1) {$+1$};
\node[vertex] (G) at (-3,-1) {$+1$};

\node[label, above=0.1mm of A] {A};
\node[label, above=0.1mm of B] {B};
\node[label, below right=0.1mm of C] {C};
\node[label, above right=0.1mm of D] {D};
\node[label, above=0.1mm of E] {E};
\node[label, above left=0.1mm of F] {F};
\node[label, below left=0.1mm of G] {G};

\draw (A) -- (B);
\draw (B) -- (C);
\draw (C) -- (D);
\draw (D) -- (B);
\draw (A) -- (E);
\draw (E) -- (F);
\draw (F) -- (G);
\draw (G) -- (E);
\end{tikzpicture}
\caption{An example of adding edge to reduce the perception gap.}\label{example_reduce}
\end{center}
\end{figure}

The initial perception gap is $5.1$. Here are some key observations: First, adding edge GC reduces the perception gap from $5.1$ to $3.6$, marking the most significant decrease in the perception gap for this graph. A plausible explanation for this is that both nodes G and C hold firm, opposing views that align with the opinions of their local neighborhoods. Second, adding edge EC reduces the perception gap from $5.1$ to $3.9$, which has a moderate effect. Although E and C have strong opposing views, E's local neighborhood is less extreme. Lastly, adding edge AC reduces the perception gap to $4.7$, which has the least impact on this reduction. This is likely because node A is neutral and doesn't significantly differ from C's opinion.

This example illustrates how the interplay between individual opinions and local neighborhood perceptions influences the network's overall perception gap when new connections are formed. They highlight the potential for targeted interventions (like edge GC in the above example) in network structure to mitigate the perception gap.


Before proving Theorem~\ref{thm:np-hardness}, we need to define the Generalized Partition Problem (GPP), since it serves as the base problem for our reduction for the hardness argument. We begin by formally defining the GPP and recalling its computational complexity.

\begin{problem}{\label{GPP}} \textbf{Generalized Partition Problem (GPP):}
Given a set of $N$ integers $V = \{v_1, v_2, \ldots, v_N\}$ and a value $k$, define a target $b = \frac{k}{N} \cdot \sum_{i=1}^{N} v_i$. The task is to find a subset $B \subseteq V$ of size $k$ that minimizes:
\begin{equation}
    \left| \sum_{v_i \in B} v_i - b \right|
\end{equation}
\end{problem}
For $k = n/2$, the GPP turns out to be the Partition Problem, which is somewhat similar, in which the goal is to determine whether the given set of integers can be partitioned into two subsets with equal sums. The Partition Problem is NP-hard, cf.~\cite{kovalyov2010generic}. Therefore, any polynomial-time approximation algorithm for GPP with ratio $\alpha > 1$ would imply a polynomial-time algorithm for the Partition Problem (since it would allow us to distinguish between zero and non-zero absolute values, which determines the Yes/No answer to the Partition Problem), which contradicts its NP-hardness. Hence, there is no polynomial-time algorithm with a fixed multiplicative approximation ratio $\alpha > 1$ for solving the GPP, unless P = NP.

We now use a reduction from GPP to establish the following result.

\begin{theorem} \label{thm:np-hardness}
There is no polynomial-time algorithm with a bounded approximation ratio for solving Problem~\ref{minkpolarization} unless P=NP.
\end{theorem}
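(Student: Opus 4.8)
The plan is a gap-preserving polynomial-time reduction from GPP (Problem~\ref{GPP}) to Problem~\ref{minkpolarization}. Given a GPP instance $(\{v_1,\dots,v_N\},k)$ with target $b=\tfrac kN\sum_i v_i$, I will construct in polynomial time a graph $G$, an opinion vector $\mathbf{S}$, and budget $k$ so that for \emph{every} size-$k$ subset $B\subseteq[N]$ the resulting perception gap is exactly $c\,\big(\sum_{i\in B}v_i-b\big)^2$, where $c>0$ is a fixed constant computable from the instance. Consequently $\min_{|T|=k}\P(G+T,\mathbf{S})=c\cdot(\mathrm{OPT}_{\mathrm{GPP}})^2$, so any $\rho$-approximate solution to the perception-gap instance is, after taking square roots, a $\sqrt\rho$-approximate solution to GPP. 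Since the discussion preceding the theorem already shows GPP admits no polynomial-time algorithm with any bounded multiplicative ratio unless P=NP, the theorem follows.

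The gadget must be rigid, so that feasible edge additions correspond precisely to size-$k$ subsets. Take vertices $\{w,u_1,\dots,u_N\}$, place a clique on $\{u_1,\dots,u_N\}$, and leave $w$ with no incident edges (only its conceptual self-loop). Then the missing-edge set is exactly $\mathcal E_C=\{\{w,u_i\}:i\in[N]\}$, so every feasible $T$ with $|T|=k$ equals $T_B=\{\{w,u_i\}:i\in B\}$ for a unique $B$ of size $k$. Scale the integers by $\Gamma:=\max(1,\max_i|v_i|)$ so that $s_{u_i}:=v_i/\Gamma\in[-1,1]$, and — the crucial choice — set $w$'s opinion equal to the global mean, $s_w:=\tfrac1N\sum_i s_{u_i}$, which is an average of numbers in $[-1,1]$ and hence admissible; one verifies $\bar s:=\tfrac1{N+1}\big(s_w+\sum_i s_{u_i}\big)=\tfrac1N\sum_i s_{u_i}=s_w$.

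Evaluating $\P(G+T_B,\mathbf{S})=\sum_v(\hat s_v/d_v-\bar s)^2$ term by term (or via Lemma~\ref{Evolution_thorem}) is then routine. A vertex $u_i$ with $i\notin B$ sees only the clique, so $\hat s_{u_i}/d_{u_i}=\tfrac1N\sum_j s_{u_j}=\bar s$ and its term is $0$; a vertex $u_i$ with $i\in B$ additionally sees $w$, and since $s_w=\bar s$ one still gets $\hat s_{u_i}/d_{u_i}=\bar s$, so its term is also $0$. Only $w$ contributes: $N[w]=\{w\}\cup\{u_i:i\in B\}$, so $\hat s_w/d_w-\bar s=\tfrac1{k+1}\big(\sum_{i\in B}s_{u_i}-k\bar s\big)$, and using $\Gamma\bar s=\tfrac1N\sum_i v_i$ together with $b=\tfrac kN\sum_i v_i$ this equals $\tfrac1{\Gamma(k+1)}\big(\sum_{i\in B}v_i-b\big)$. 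Hence $\P(G+T_B,\mathbf{S})=c\big(\sum_{i\in B}v_i-b\big)^2$ with $c=\tfrac1{\Gamma^2(k+1)^2}>0$, as claimed; the whole construction (graph size $O(N^2)$, rational opinions with polynomial-size denominators) is clearly polynomial.

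To finish: if a polynomial-time algorithm $\mathcal A$ had approximation ratio $\rho<\infty$ for Problem~\ref{minkpolarization}, running it on the above instance returns $T_B$ with $|\sum_{i\in B}v_i-b|\le\sqrt\rho\cdot\mathrm{OPT}_{\mathrm{GPP}}$, a polynomial-time bounded-ratio approximation for GPP, contradicting its hardness (the case $k=N/2$ subsumes Partition). Equivalently, since $\Gamma N\big(\sum_{i\in B}v_i-b\big)$ is an integer, $\mathrm{OPT}_{\mathrm{GPP}}\in\{0\}\cup[1/N,\infty)$, so the optimal perception gap is $0$ iff the underlying Partition instance is a yes-instance, and a bounded-ratio algorithm must output a solution of value $0$ in that case, deciding Partition in polynomial time. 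The delicate point throughout is the joint design choice that keeps $\mathcal E_C$ pinned to exactly the $N$ intended edges (the clique on the $u_i$'s is what forces this) while simultaneously making every vertex other than $w$ contribute zero for every $B$ (the choice $s_w=\bar s$); once both hold, the appearance of the GPP target $b$ in $w$'s term is automatic.
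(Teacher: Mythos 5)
Your proposal is correct and follows essentially the same route as the paper: a clique on $N$ nodes plus one isolated node whose opinion is set to the global mean, so that the only missing edges are the $N$ pendant ones, all clique vertices contribute zero, and the single surviving term equals $\bigl(\sum_{i\in B}v_i-b\bigr)^2/\bigl(v_{\max}^2(k+1)^2\bigr)$, whence a $\rho$-approximation would yield a $\sqrt{\rho}$-approximation for GPP. Your only departures are cosmetic refinements (using $\Gamma=\max(1,\max_i|v_i|)$ to cover the all-zero case, and the explicit integrality remark tying the gap back to Partition).
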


\begin{proof} 
Consider an instance of the GPP where we are given a set of $N$ integers, $V = \{v_1, v_2, \ldots, v_N\}$ and a value $k$, and the goal is to find a subset $B \subseteq V$ of size $k$ such that for any target value as $b = \frac{k}{N}  \sum_{i=1}^{N} v_i$, $\left|\sum_{v_i \in B} v_i - b\right|$ is minimized. Now, we construct an instance of Problem~\ref{minkpolarization} from the instance of GPP as follows:

Begin with a complete graph $K_N$ with nodes $U = \{u_1, u_2, \ldots, u_N\}$. Let $v_{\max} = \max_{v_i \in V} |v_i|$ and assign an opinion $s_i = \frac{v_i}{v_{\max}}$ to each node $u_i$. Add a node $u_0$ with opinion $s_0 = t$ (the value of $t$ will be determined later). 
Now, form the network $G(U \cup \{u_0\}, E) = K_N \cup \{u_0\}$
. To complete the proof, we must demonstrate that solving Problem~\ref{minkpolarization} on this constructed instance is equivalent to solving the original GPP. Let $W = \sum_{i = 1}^{N} s_i$. Let $T$ be any subset of size $k$ of missing edges of graph $G$ and $b'$ be the sum of opinions of the terminal nodes of edges in $T$ except $u_0 $, i.e., $ b' = \sum_{\{u_i, u_0\} \in T} s_i$.
Figure~\ref{fig:example_construction} illustrates this construction for $N=7$ and $k=3$.
\begin{figure}[h]
    \centering
    \begin{tikzpicture}[scale=1.5]
        \def\N{7}

        \foreach \i in {1,...,\N} {
            \node[draw, circle] (u\i) at ({360/\N * (\i - 1)}:1) {$u_{\i}$};
        }

        \foreach \i in {1,...,\N} {
            \foreach \j in {\i,...,\N} {
                \ifnum\i<\j
                    \draw (u\i) -- (u\j);
                \fi
            }
        }

        \node[draw, circle] (u0) at (4,0) {$u_0$};

        \draw[red, thick] (u0) -- (u1);
        \draw[red, thick] (u0) -- (u2);
        \draw[red, thick] (u0) -- (u6);
    \end{tikzpicture}
\caption{Illustration of the reduction from GPP to Problem~\ref{minkpolarization} for $N=7$ and $k=3$. 
Red edges denote the subset $T$ of $k$ added edges connecting $u_0$ to selected nodes.
}
\label{fig:example_construction}
\end{figure}
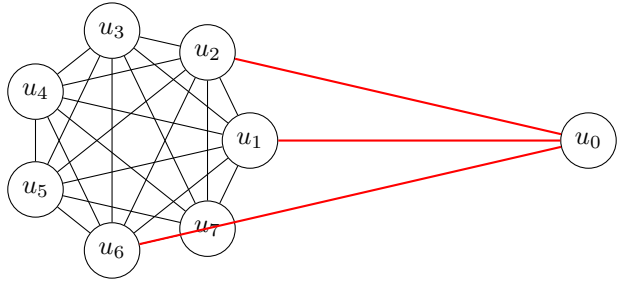

Then, the perception gap of the graph $G$ with the addition of edges in $T$ will be calculated as 
\begin{align*}
\P(G(U, E + T), \textbf{S}) &= \left( \frac{t + b'}{k + 1} - \frac{W + t}{N + 1} \right)^2 + (N-k) \left( \frac{W}{N} - \frac{W + t}{N + 1} \right)^2.
\end{align*}
By setting $t = \frac{W}{N}$ we will have 
\begin{align*}
\P(G(U, E + T), \textbf{S}) &= \left( \frac{\frac{W}{N} + b'}{k + 1} - \frac{W + \frac{W}{N}}{N + 1} \right)^2 + (N-k) \left( \frac{W}{N} - \frac{W + \frac{W}{N}}{N + 1} \right)^2.
\end{align*}
which can be simplified as 
\begin{equation*}
\P(G(U, E + T) ,\textbf{S}) =  \left( \frac{b'N - Wk }{N(k + 1)}   \right)^2
\end{equation*}

Now, as $W = \sum_{i = 1}^{N} s_i = \frac{\sum_{i = 1}^{N} v_i}{v_{\max}}$ where $v_{\max} = \max |v_i|$. Therefore, target value $b = \frac{k}{N} \cdot(\sum_{i=1}^{N} v_i ) = \frac{v_{\max} k W}{N}$. Hence, 
\begin{equation} 
\P(G(U, E + T) ,\textbf{S})  = \left( \frac{b' - \frac{b}{v_{\max}} }{k + 1}   \right)^2 = \left( \frac{\sum_{v_i \in B} v_{i} - {b}}{v_{\max}(k + 1)} \right)^2. \label{Reduction_polarisation0}
\end{equation}

Let, on the contrary, some polynomial time algorithm ($A$) with an approximation ratio say $\mathbf{\alpha}$ $(> 1)$ for Problem~\ref{minkpolarization}, exists. Let $T^{A}$ (with corresponding $B^{A} \subseteq V $) be the subset of missing edges of $G$ obtained by the algorithm with perception gap value as $\P (G +  T^{A}, \textbf{S})$. Let $b^{A}$ be the sum of opinions of the terminal nodes of edges in $T^{A}$ except $u_0$. Let $T^{OPT}$ (with corresponding $B^{OPT} \subseteq V $) be the optimum subset of missing edges of $G$ 
with optimum value $\P (G+T^{OPT}, \textbf{S}) $ and $b^{OPT}$ be the sum of opinions of the terminal nodes of edges in $T^{OPT}$ except $u_0$.Then we have the following:
\begin{equation*}
    \P(G+T^{A},\textbf{S})  \leq \alpha \P(G+T^{OPT},\textbf{S}). 
\end{equation*}
Using Equation~\eqref{Reduction_polarisation0} we will have 
\begin{equation*}
\left( \frac{b^A - \frac{b}{v_{\max}} }{k + 1}   \right)^2  \leq \alpha \left( \frac{b^{OPT} - \frac{b}{v_{\max}} }{k + 1}   \right)^2
\end{equation*}
which is equivalent to 
\begin{equation*}
\left( \frac{\sum_{v_i \in B^A} v_{i} - {b}}{v_{\max}(k + 1)} \right)^2 \leq \alpha \left( \frac{\sum_{v_i \in B^{OPT}} v_{i} - {b}}{v_{\max}(k + 1)} \right)^2, 
\end{equation*}
or 
\begin{equation*}
\left(\sum_{v_i \in B^A} v_{i} - {b} \right)^2  \leq \alpha \left(\sum_{v_i \in B^{OPT}}v_{i} - {b} \right)^2
\end{equation*}

which gives us $\left|\sum_{v_i \in B^A} v_{i} - {b} \right|  \leq \sqrt{\alpha} \left|\sum_{v_i \in B^{OPT}}v_{i} - {b} \right|$. This means that we have a $\sqrt{\alpha}$ approximation algorithm for the GPP. However, this can't be the case since we argued above. Thus, we reach a contradiction.
\end{proof}

\begin{lemma}\label{monotonicity_submodularity}
The function $f(\cdot)$ in  Problem~\ref{minkpolarization}, is neither monotone nor supermodular.
\end{lemma}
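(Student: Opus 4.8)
The plan is to disprove both properties by exhibiting small, explicit counterexamples, since for a claim of the form ``neither monotone nor supermodular'' it suffices to find one instance violating monotonicity and one (possibly different) instance violating supermodularity. For the non-monotonicity part, I would use Lemma~\ref{Evolution_thorem}, which gives a clean closed form for the effect of a single edge addition: $\P(G_{+e},\textbf{S}) - \P(G,\textbf{S})$ is a sum of two bracketed terms, one for each endpoint. The key observation is that each bracket $\left(\frac{\hat{s}_i + s_j}{d_i+1} - \bar{s}\right)^2 - \left(\frac{\hat{s}_i}{d_i} - \bar{s}\right)^2$ can be made \emph{positive}: if node $v_i$ already has a balanced local view (so $\hat{s}_i/d_i$ is close to $\bar{s}$) but its new neighbor $v_j$ holds an extreme opinion far from $\bar{s}$, then adding the edge pushes $\hat{s}_i/(d_i+1)$ away from $\bar{s}$, increasing the term. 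A concrete choice --- e.g.\ take the path/small graph of Figure~\ref{example_reduce}, or an even simpler gadget such as a triangle with opinions $\{+1,-1,0\}$ where every node currently perceives roughly the global mean, and then add an edge to a pendant extremist --- makes $f(\{e\}) > f(\emptyset)$, i.e.\ $A = \emptyset \subseteq B = \{e\}$ but $f(B) > f(A)$, contradicting monotone decreasingness. I would simply state the numeric values of $\P$ before and after and let the reader verify via Definition~\ref{def_polarization} or Lemma~\ref{Evolution_thorem}.

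For the non-supermodularity part, I need subsets $A \subseteq B$, an element $x \notin B$, and a violation of $f(A \cup \{x\}) - f(A) \le f(B \cup \{x\}) - f(B)$, i.e.\ I want the marginal gain (here, change, possibly negative) of adding $x$ to be \emph{larger} on the smaller set $A$ than on the larger set $B$ --- so the marginal \emph{decrease} should be larger for $B$, meaning the reduction benefit of $x$ grows as more edges are already present. The natural mechanism: pick a node $v_i$ whose local average is skewed; adding one edge to a neutral/opposing node already helps, and adding a second such edge (the $x$ edge) helps \emph{more} when the first one is present because after the first addition the denominator $d_i$ is larger and the interaction between the two corrections is nonlinear (the square function is not additive across increments). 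Again I would construct a small explicit graph --- a couple of edges among a handful of nodes with a mix of $+1$, $-1$, $0$ opinions --- compute $\P$ for the four relevant configurations $\emptyset$-like base, $A$, $B$, and with $x$ added to each, and exhibit the inequality going the wrong way. Keeping the graph to five or six nodes should make all four evaluations short.

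The main obstacle is not conceptual but bookkeeping: I must choose the graph and opinion vector carefully so that (i) the ``missing edge'' $x$ is genuinely missing in both $G+A$ and $G+B$, (ii) all degrees stay well-defined (the conceptual self-loop guarantees $d_i \ge 1$ always, so this is automatic), and (iii) the four perception-gap values come out to clean enough numbers that the strict inequalities are transparent and checkable by hand. I would hedge the design by using Lemma~\ref{Evolution_thorem} to predict the sign of each increment \emph{before} fixing constants, so that I can tune the opinions to amplify the desired gap. A secondary, minor obstacle is presentation: since monotonicity here means \emph{decreasing} (per the paper's footnote), I must be careful to state the violation as $f(B) > f(A)$ for $A \subseteq B$, not the reverse, and similarly track the direction of the supermodularity inequality precisely so the counterexample is unambiguous. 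Once the two gadgets are pinned down, the proof is just ``here are the numbers; plug into Definition~\ref{def_polarization}.''
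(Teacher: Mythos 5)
Your proposal matches the paper's proof in both structure and substance: the paper likewise establishes non-monotonicity via a small explicit graph (four nodes with opinions $+1,-1,-1,+1$ where adding one edge increases $\P$) and non-supermodularity via a three-node gadget with opinions $\{+1,-1,0\}$ and nested edge sets $T\subseteq W$ for which $f(T\cup\{e_1\})-f(T)=0.25>-0.25=f(W\cup\{e_1\})-f(W)$. Your plan is correct, correctly oriented with respect to the ``monotone decreasing'' and supermodularity inequalities, and only needs the concrete numerical evaluations filled in to be complete.
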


\begin{proof}
The function $f(T) = \P(G+T,\textbf{S})$ is not monotonically decreasing. Consider the graph shown in Figure~\ref{fig:non-mono-submod} (left) for this. Here, adding the edge $\mathbf{BD}$ increases the perception gap.
Secondly, the function is not supermodular. Consider the network shown in Figure~\ref{fig:non-mono-submod} (right). We have the network with no edges and define two edge sets, $\mathbf{T} = \{e_3\}$ and $\mathbf{W} = \{e_3, e_2\}$. Straightforward calculation shows that 
    $f(T\cup \{e_1\}) - f(T) = 0.25 > -0.25 =  f(W \cup \{e_1\}) - f(W)$ 
which violates the definition of supermodularity. 

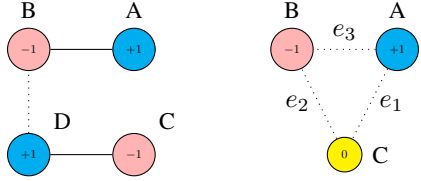
\begin{figure}[h]
\centering
\begin{tikzpicture}[ scale = 0.7,
    vertex/.style={circle, draw, minimum size=0.3mm},
    label/.style={font=\small}
]
\begin{scope}[xshift=-4cm]
\node[vertex, fill=cyan] (A) at (1,1) {\scalebox{0.5}{\small $+1$}};
\node[vertex, fill=red!30] (B) at (-1,1) {\scalebox{0.5}{\small $-1$}};
\node[vertex, fill=red!30] (C) at (1,-1) {\scalebox{0.5}{\small $-1$}};
\node[vertex, fill=cyan] (D) at (-1,-1) {\scalebox{0.5}{\small $+1$}};

\node[label, above =0.1mm of A] {A};
\node[label, above =0.1mm of B] {B};
\node[label, above right=0.1mm of C] {C};
\node[label, above right=0.1mm of D] {D};

\draw (A) -- (B);
\draw (C) -- (D);
\draw[dotted] (D) -- (B);
\end{scope}

\begin{scope}[xshift=1cm]
\node[vertex, fill=cyan] (A) at (1,1) {\scalebox{0.5}{\footnotesize $+1$}};
\node[vertex, fill=red!30] (B) at (-1,1) {\scalebox{0.5}{\small $-1$}};
\node[vertex, fill=yellow] (C) at (0,-1) {\scalebox{0.5}{\small $0$}};

\node[label, above =0.1mm of A] {A};
\node[label, above =0.1mm of B] {B};
\node[label, right = 0.1mm of C] {C};

\draw[dotted] (A) -- (C) node[midway, right] {$e_1$};
\draw[dotted] (C) -- (B) node[midway, left] {$e_2$};
\draw[dotted] (A) -- (B) node[midway, above] {$e_3$};
\end{scope}
\end{tikzpicture}
\caption{Examples showing non-monotonicity (left) and non-supermodularity (right) for the objective function of Problem~\ref{minkpolarization}.}
\label{fig:non-mono-submod}
\end{figure}
\end{proof}

\section{Algorithms}\label{algorithms}
As we proved in Theorem~\ref{thm:np-hardness}, Problem~\ref{minkpolarization} does not allow for any polynomial-time algorithm with a bounded approximation guarantee. Thus, we resort to some heuristics: (i) \textbf{Random}: Add edges uniformly at random; (ii) \textbf{Greedy}: Iteratively add the edge that maximally reduces perception gap (Algorithm~\ref{alg:spgreedy} with $b=1$); (iii) \textbf{Batch Greedy}: Iteratively add the top $b$ edges ranked by \textbf{Greedy} (with $b=5$ in our experiments); (iv) \textbf{Random Batch Greedy}: Randomly select one edge from the top $b$ edges ranked by batch greedy.
\begin{algorithm}[ht]\small
\caption{Greedy($G$, $k$, $b$, \textbf{$S$})}
\label{alg:spgreedy}
\KwIn{Graph $G=(V,E)$; an integer $ k \leq |\mathcal{E}_C|$; batch size $b$; an opinion vector $\textbf{S}$}
\KwOut{A subset $ T \subseteq \mathcal{E}_C $ with $ |T| = k$}
\BlankLine
Initialize solution $ T = \emptyset $\;
\While{$|T| < k$}{
    Compute $ f(e) $ for each $ e \in \mathcal{E}_C \setminus T $\;
    Sort all missing edges $ e \in \mathcal{E}_C \setminus T $\ in non-decreasing order based on $f(e)$\; 
    Select the top $t = \min\{b, k-|T|\}$ edges, $\{e_1,\cdots,e_t\}$ from the sorted ranked list\; 
    Update solution $ T \gets T \cup \{e_i\}_{i=1}^t $\;
    Update the graph $ G=(V, E \cup \{e_i\}_{i=1}^t) $\;
}
\KwRet{$ T $}\;
\end{algorithm}

We aim to compare the results of these algorithms with the optimal (exact) solution. However, since the search space for finding the optimal solution grows exponentially, it becomes computationally prohibitive to evaluate graphs larger than $20-30$ nodes using a brute-force approach. To address this limitation and consider larger graphs, we propose a non-trivial exact algorithm that reduces the search space. This algorithm first sorts all missing edges by their maximum potential contribution (MPC) to reducing the perception gap. MPC of an edge $e= \{u,v\} \in \mathcal{E}_C$ is the maximum reduction achievable by adding $e$ in all possible scenarios when $k' = 0, 1, 2, \dots, k - 1$ edges already having been added to the neighborhood $N(u) \cup N(v)$ 
. Roughly speaking, MPC measures the best reduction one can hope to achieve by adding $e$ when adding $k$ edges. And edge $e$ can then be safely disregarded if the summation of MPC for the edge $e$ and the top $k-1$ edges (ranked by MPC) is less than the reduction achieved by some other solution, for example, the one obtained by the \textbf{Greedy} approach. The reason is that if an edge cannot produce a better solution even when paired with the $k-1$ best edges and their best possible contributions (MPCs) considered, then it cannot be part of an optimal solution. Please see Algorithm~\ref{mpc_alg} for a formal description.
The correctness of the MPC algorithm can be justified by analyzing both the calculation of the MPC values and the filtering process for removing ineffective edges.

First, the algorithm correctly computes the MPC value, an upper bound on the potential contribution to reducing the perception gap for each missing edge, when adding $k$ edges. For each edge $e$, the algorithm iterates over all combinations of $k'$ edges from the neighborhood $N(u) \cup N(v)$ and calculates the reduction in perception gap for each combination. This is because we are interested in all possible scenarios for $e$ when we add $k$ edges, and we focus only on incident edges, since they are the only ones that affect the value we are interested in. The edge $e$ is assigned the maximum reduction in the perception gap across all these scenarios, ensuring that the computed MPC value accurately reflects its maximum potential impact on perception gap reduction.

Next, once all edges are ranked by their MPC values in descending order, the algorithm computes $S_e = mpc_e + \sum_{i=1}^{k-1} mpc_{i}$, where $ mpc_e$ is the MPC of edge $e$, and $\sum_{i=1}^{k-1} mpc_{i}$ is the sum of the MPC values of the top $k-1$ edges. This sum represents an upper bound on the cumulative contribution of edge $e$ and the most beneficial edges to reducing the perception gap. Suppose $S_e$ is smaller than the perception gap reduction achieved by any other solution, particularly the one obtained by the \textbf{Greedy} algorithm. In that case, the edge is removed from the list of missing edges. The reason is that if an edge cannot contribute to an optimal solution even when paired with the top $k-1$ edges and the best possible scenario for adding that edge (note we use MPC), then it cannot be part of the optimal solution. Thus, all such edges can safely be removed to reduce the search space.

\begin{algorithm}[h t]\small
\caption{MPC ($G$, $k$, $b$, \textbf{$S$})}
\label{mpc_alg}
\KwIn{Graph $G=(V, E)$, opinion vector $\textbf{S}$, an integer $ k \leq |\mathcal{E}_C|$}
\KwOut{Reduced Search Space} 
\For{ all $e= \{u, v\} \in \mathcal{E}_C$}{
    Initialize $mpc_e \gets -\infty$\;
    \For{$k'= 0$ \KwTo $k-1$}{
        \For{each combination $C$ of $k'$ edges from $\mathcal{E}_C$ incident to $v$ or $u$, except $e=\{u,v\}$}{ 
            Compute perception gap $\P_{\text{orig}}$ for $G'=(V,E \cup C)$ 
            \;
            Compute perception gap $\P_{\text{new}}$ for $G''=(V,E \cup C \cup \{ e \})$\;
           $f_e \gets \P_{\text{orig}} - \P_{\text{new}}$\;
            \If{$f_e > mpc_e$}{
                $mpc_e \gets f_e$\;
            }
        }
    }
}

Sort the edges by $mpc_e$ in non-increasing order as $[mpc_1, mpc_2, \dots, mpc_{|\mathcal{E}_C|}]$\;
Set $T = Greedy(G, k, b=1, \textbf{S})$\;

\For{each $e \in \mathcal{E}_C$}{
     $S_e = mpc_e + \sum_{i=1}^{k-1} mpc_{i}$\;
    \If{$S_e <\P (G,\textbf{S}) - \P (G+T,\textbf{S})$;\ 
    }{
        Remove edge $e$ from the list of missing edges\;
    }
}
\Return the remaining missing edges as the search space;
\end{algorithm}

Algorithm~\ref{mpc_alg} (MPC) reduces the search space. Then, we can apply a brute-force approach to find the optimal solution. As we will see, this is quite effective and permits us to cover larger graphs in our experiments.

\section{
Experiments 
on Algorithms} \label{experiments_section_4}
We evaluate the performance of our proposed algorithms on various graph datasets.

\begin{figure*}[t]
    \centering
    \includegraphics[width=0.19\textwidth]{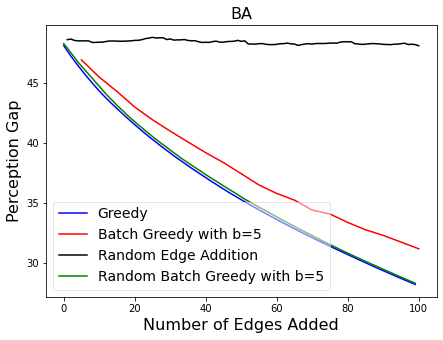} 
    \includegraphics[width=0.19\textwidth]{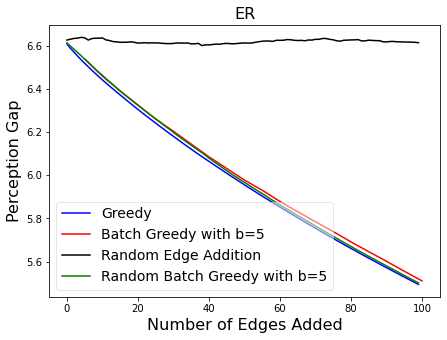} 
    \includegraphics[width=0.19\textwidth]{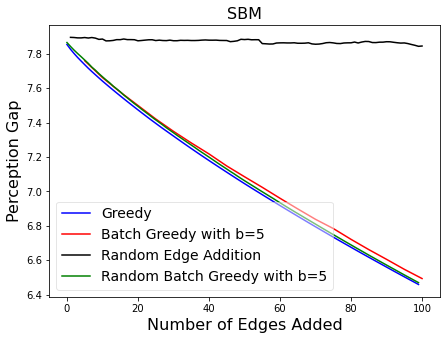} 
    \includegraphics[width=0.19\textwidth]{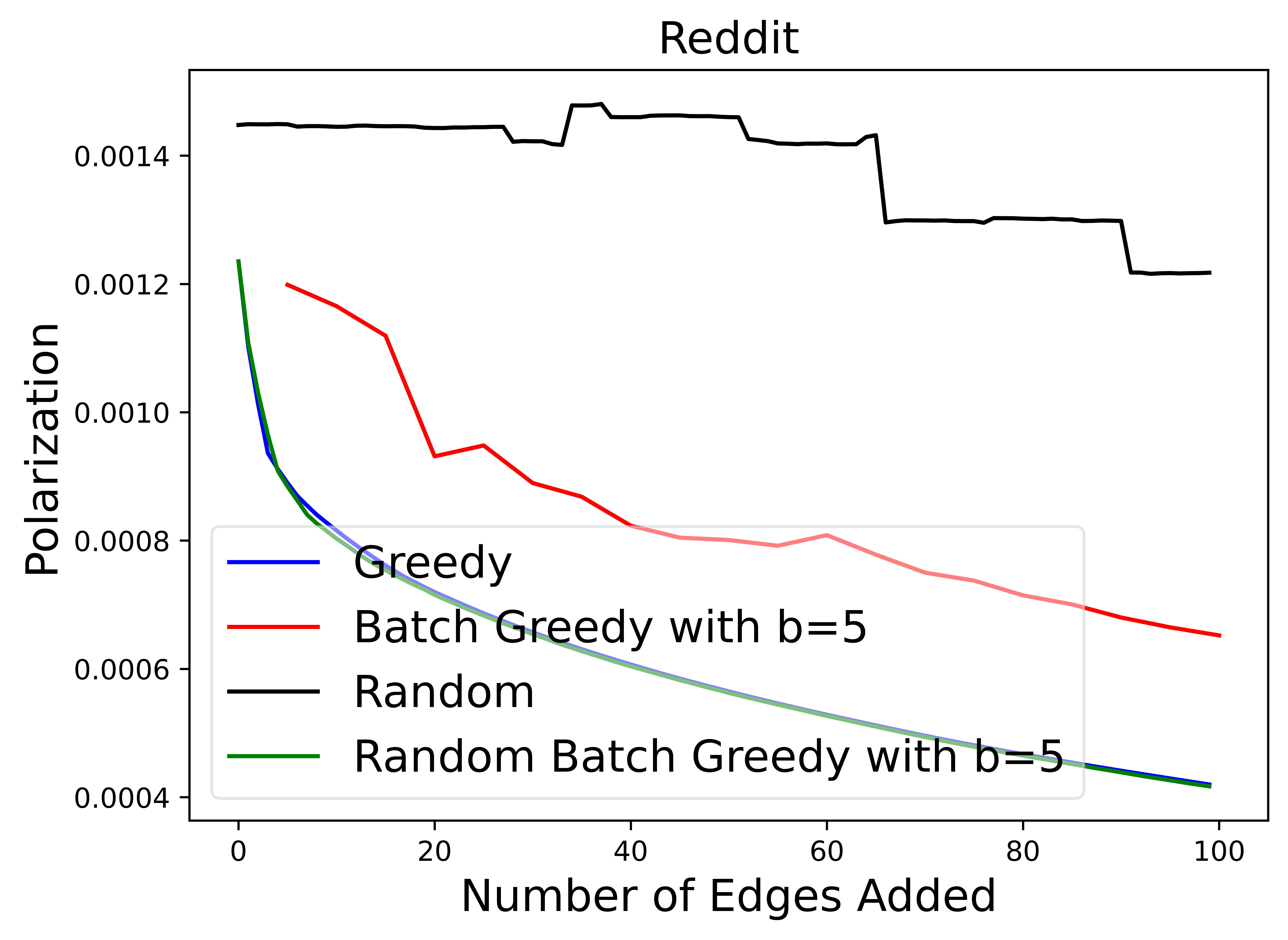}
    \includegraphics[width=0.19\textwidth]{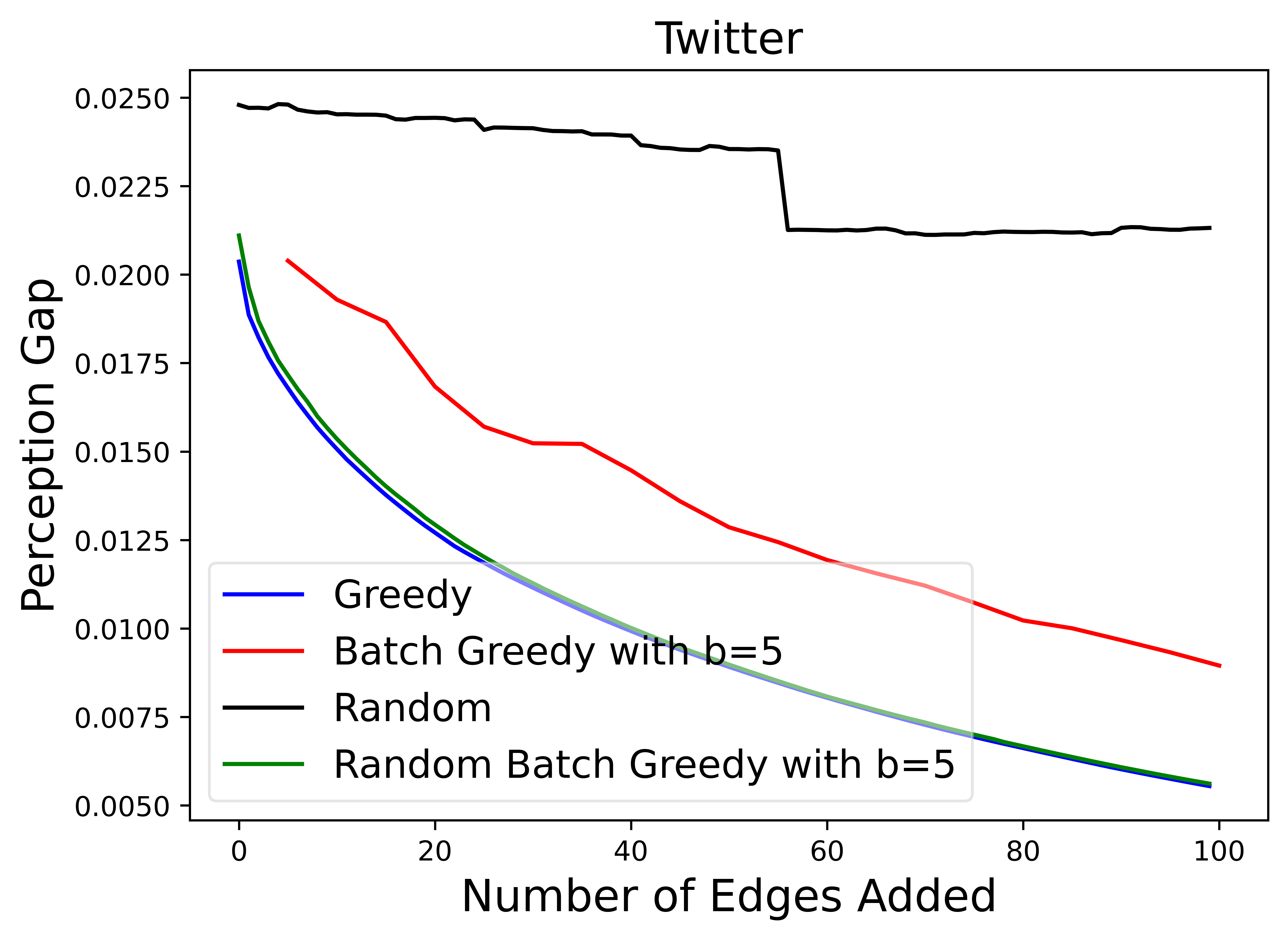}
    \includegraphics[width=0.19\textwidth]{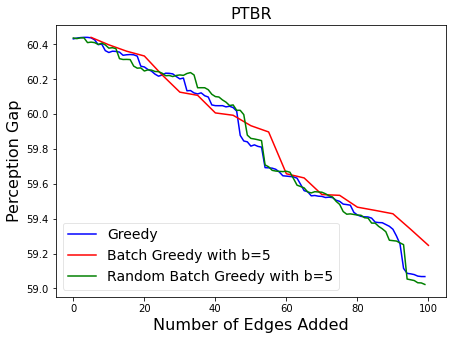} 
    \includegraphics[width=0.19\textwidth]{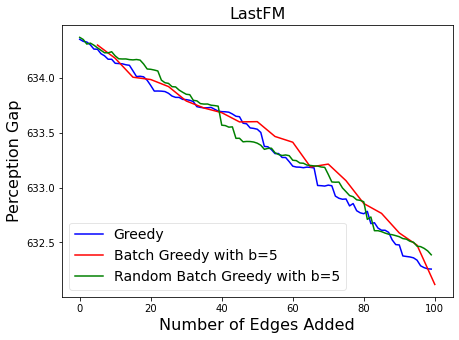} 
    \includegraphics[width=0.19\textwidth]{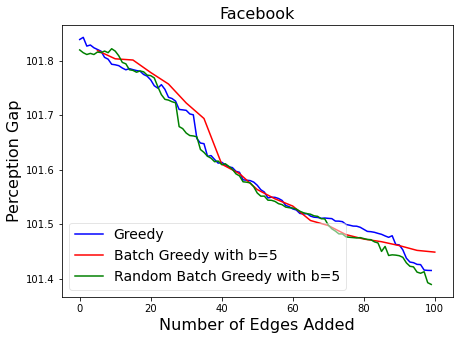} 
    \includegraphics[width=0.19\textwidth]{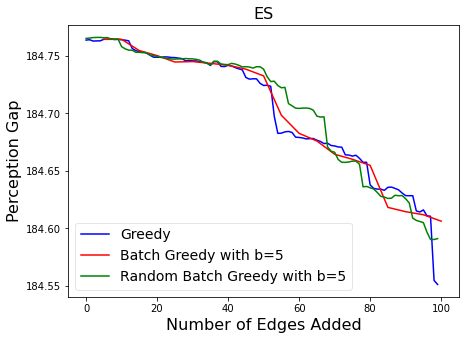} 
    \includegraphics[width=0.19\textwidth]{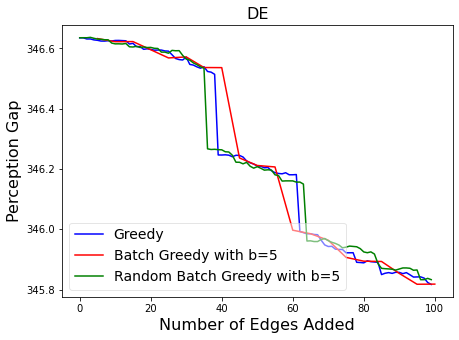} 
    \caption{Value of perception gap by adding different numbers of edges for the studied algorithms on our real-world and synthetic networks (\textbf{Random}'s performance is poor also in the second row, and thus is omitted to make the difference between other algorithms more distinct).
    } \label{fig:results}
\end{figure*}

\paragraph{Experimental Setup.} We consider a collection of real-world and synthetic graph data. We use seven real-world datasets: Reddit ($546$ nodes, $8,962$ edges), Twitter ($531$ nodes, $3,621$ edges), Twitch-PT ($1,912$ nodes, $31,299$ edges), Facebook ($4,039$ nodes, $88,234$ edges), Twitch-ENGB ($7,126$ nodes, $35,324$ edges), Twitch-DE ($9,498$ nodes, $153,138$ edges). These datasets are publicly available from the \url{https://snap.stanford.edu/data/#socnets}
{SNAP dataset repository}~\cite{leskovec2016snap}. The synthetic models include Erd\H{o}s--R\'enyi (ER), Barab\'asi--Albert (BA), and stochastic block model (SBM) networks. For the SBM model, the parameters are set to $p=0.05$ and $q=0.03$, for the BA graph, we set $m=4$, and for the ER graph, we use $p=0.05$ for all of our experiments. (The parameters are set to mimic the degree distribution in the real-world networks of similar size). 
%

Initial opinions for Reddit were derived using a linguistic analytics tool, and for Twitter, a sentiment analysis tool tailored for tweets analyzed first-hour posts. The datasets were collected by~\cite{chitra2020analyzing}. For other datasets, our experiments draw opinions uniformly at random from the interval $[-1, 1]$.

\paragraph{Comparison of Algorithms.} The performance of the algorithms proposed in Section~\nameref{algorithms} were evaluated on all datasets. The results are illustrated in Figure~\ref{fig:results}. As expected, \textbf{Random} performs poorly in comparison to the greedy-based approaches. Among the greedy-based algorithms, \textbf{Greedy} performs ``slightly'' better than the other two algorithms (in return for more computation), but the difference is negligible. Overall, these diagrams show that all three algorithms exhibit nearly identical performance and are effective at reducing the perception gap within networks.

\paragraph{Greedy vs Optimal.} To study how accurate our proposed algorithms actually are, we put this into the test for our algorithm \textbf{Greedy} (from above, we know the other two Greedy algorithms perform similarly). 

First, we compare \textbf{Greedy} against the optimal solution. Since we use Algorithm~\ref{mpc_alg} to reduce the search space before applying a brute-force search, we can cover graphs with up to $100$ nodes (which would otherwise take prohibitively long to search). For real-world networks, we picked $100$ nodes by conducting a BFS from an arbitrary node. The comparison results are reported in Table~\ref{table:greedyVersusOptimal}.  We observe that \textbf{Greedy} returns an almost optimal solution in most scenarios. 

\begin{table}[h]
\centering
\caption{Comparison of mean ($\mu$) error and standard deviation ($\sigma$) for \textbf{Greedy} vs optimal solutions with $3$ edge additions.
}  
\label{table:greedyVersusOptimal}
\begin{tabular}{lcccccccccc}
\hline \hline
 & \multicolumn{2}{c}{\textnormal{1-edge}} & & \multicolumn{2}{c}{\textnormal{2-edges}} & & \multicolumn{2}{c}{\textnormal{3-edges}} \\
\cline{2-3}\cline{5-6}\cline{8-9}
\multicolumn{1}{l}{\textnormal{Network}} & \textnormal{$\mu$} & \textnormal{$\sigma$} & & \textnormal{$\mu$} & \textnormal{$\sigma$} & & \textnormal{$\mu$} & \textnormal{$\sigma$} \\
\hline
\textnormal{ER} & 0.000 & 0.000 & & 0.000 & 0.000 & & 0.004 & 0.003 \\ 
\textnormal{BA} & 0.000 & 0.000 & & 0.000 & 0.000 & & 0.000 & 0.000 \\ 
\textnormal{SBM} & 0.000 & 0.000 & & 0.000 & 0.000 & & 0.006 & 0.004\\ 
\textnormal{Reddit} & 0.000 & 0.000 & & 0.000 & 0.000 & & 0.000 & 0.000 \\ 
\textnormal{Twitter} & 0.000 & 0.000 & & 0.000 & 0.000 & & 0.001 & 0.002 \\ 
\textnormal{Twitch-PT} & 0.000 & 0.000 & & 0.000 & 0.000 & & 0.001 & 0.000 \\ 
\textnormal{Facebook} & 0.000 & 0.000 & & 0.000 & 0.000 & & 0.002 & 0.001 \\ 
\textnormal{Twitch-ENGB} & 0.000 & 0.000 & &  0.000 &  0.000 & &  0.000 &  0.000 \\ 
\textnormal{LastFM} & 0.000 & 0.000 & & 0.000 & 0.000 & & 0.000 & 0.000  \\ 
\textnormal{Twitch-DE} & 0.000 & 0.000 & & 0.000 & 0.000 & &0.001 & 0.000 \\ 
\hline \hline
\end{tabular}
\end{table}

We also evaluated the accuracy of \textbf{Greedy} on larger graphs. We built graphs with more than $100$ nodes (tested above), but we assigned the opinion vector so that the perception gap was almost $0$. We then remove $k$ edges from the graph before feeding the input into \textbf{Greedy}. We thus know \textbf{Greedy} performs well if it produces a solution close to $0$. 
For example, in an ER graph, the expected perception gap value from Theorem~\ref{expected_value} simplifies to $\frac{1}{p} - 1$, which approaches zero as $p$ approaches $1$ (we use the desired opinion distribution from Theorem~\ref{expected_value}). For $p = 0.51$, the expected perception gap is slightly less than $1$ (implying that the optimal solution for \textbf{Greedy} is a small value, more precisely between $0$ and $1$). We removed $3\%$ of the edges and ran the \textbf{Greedy} algorithm, which reduced the perception gap to zero, demonstrating its optimal performance.

As another test, we consider a structure with $M=100$ cliques, each of size four, where two nodes have an opinion of $+1$ and the other two have an opinion of $-1$. The initial perception gap is zero. After removing $k=100$ edges, the \textbf{Greedy} algorithm is applied to add back $k=100$ edges and minimize perception gap. As seen in Figure~\ref{fig:polarization_near_zer} (left), surprisingly, the \textbf{Greedy} already reaches the optimal perception gap $0$ by adding only $80$ edges. This is because the \textbf{Greedy} takes advantage of connecting different cliques, not just the removed edges. The graph obtained after adding edges is visualized in Figure~\ref{fig:polarization_near_zer} (Right).

\begin{figure}[t]
\begin{center}
\includegraphics[width=0.45\textwidth, height=0.25\textheight]{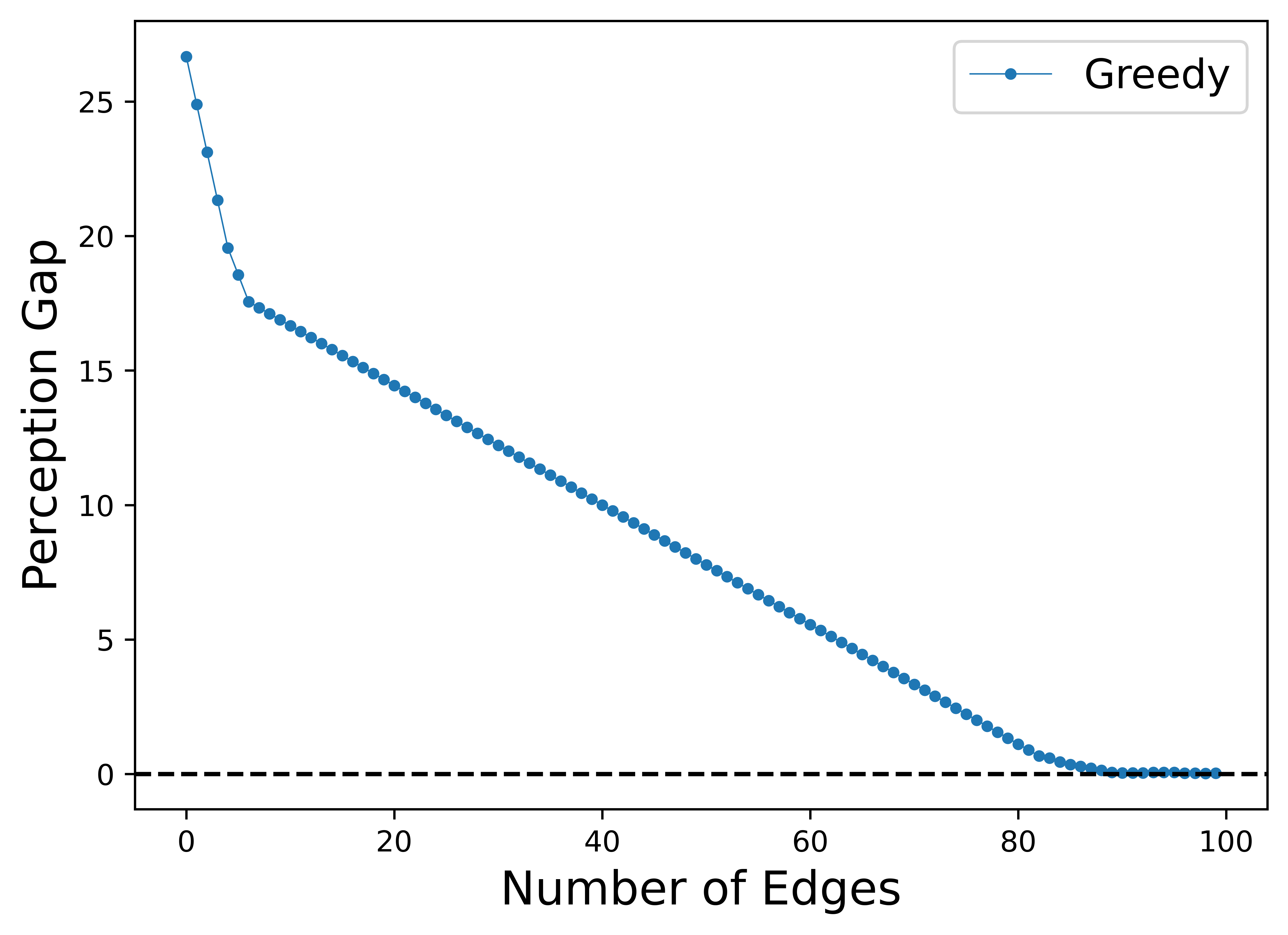} 
\includegraphics[width=0.45\textwidth, height=0.26\textheight] {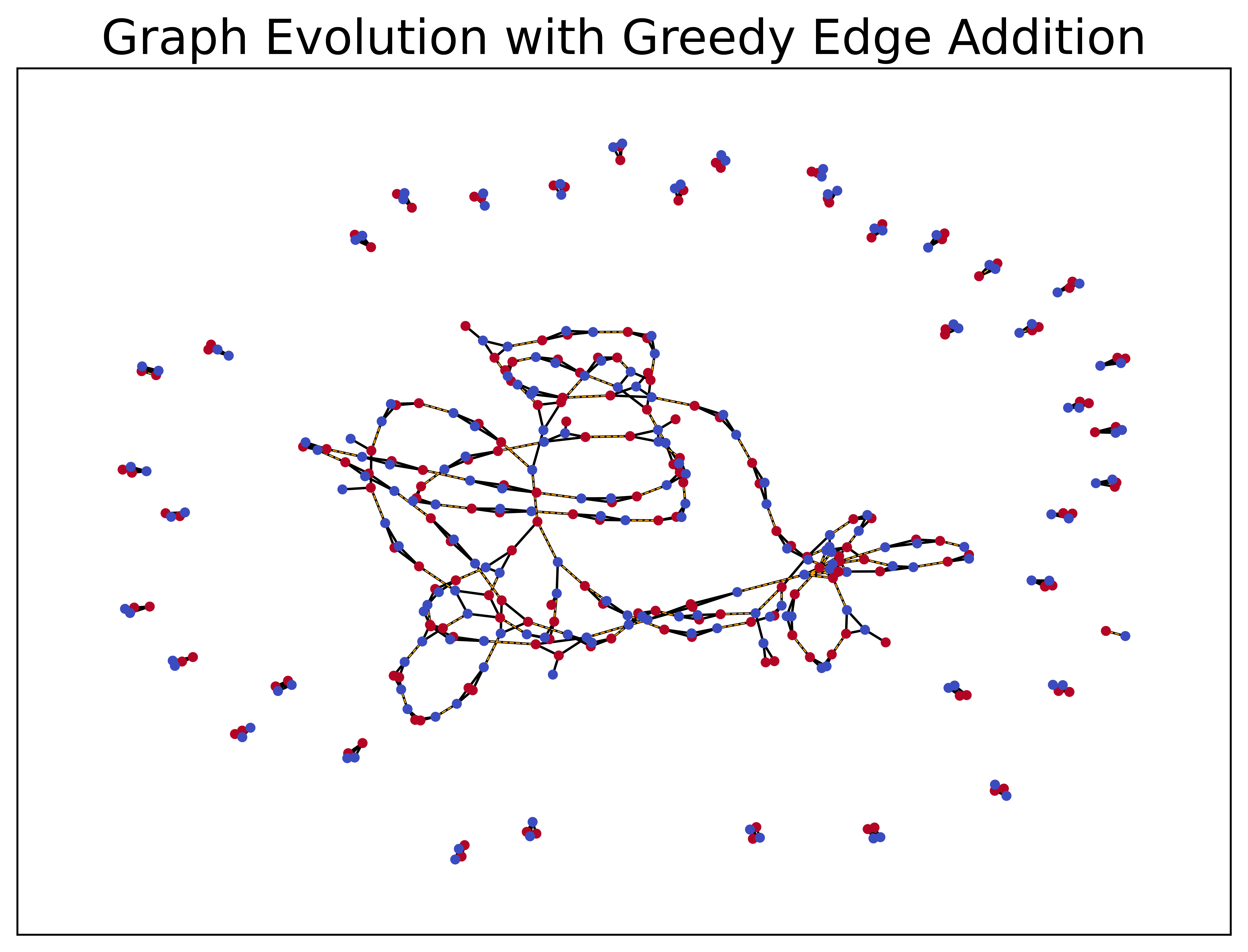} 
\caption{The Greedy algorithm eliminates the perception gap in a graph of $M=100$ cliques (each with two $+1$ and two $-1$ nodes); right: graph evolution as inter-clique edges are added.}
\label{fig:polarization_near_zer}
\end{center}
\end{figure}

\section{Conclusion}\label{conclusion}

This paper introduces the Perception Gap Index, a measure that complements existing metrics such as variance-based polarization, which often overlook structural dependencies, and extends phenomena like the majority illusion to continuous opinion settings, thereby offering a more nuanced understanding of perception disparities in relation to network structure.
We derived theoretical bounds showing how structural properties, such as eigenvalues in regular graphs and inter-group connectivity in stochastic block models, affect the perception gap. We also studied the problem of minimizing perception gap through targeted link recommendations, proving its inapproximability and lack of monotonicity and supermodularity in the objective function. Despite these challenges, we proposed several greedy-based algorithms that demonstrated to be very effective on real-world and synthetic data.

Future work includes exploring whether the perception gap objective can be reformulated to allow approximation guarantees, despite its non-monotonicity and lack of supermodularity. Another direction is to study the perception gap in weighted networks or within opinion dynamics models, such as the FJ model. Additionally, extending the stochastic analysis to include variance and concentration bounds is also a promising direction for future work.

\bibliographystyle{unsrt}

\end{document}